\documentclass[journal]{IEEEtran}

\ifCLASSINFOpdf
\else
\usepackage[dvips]{graphicx}
\fi
\usepackage{url}
\usepackage{microtype}
\usepackage{cite}

\usepackage{tumcolor}
\usepackage{amsmath,amssymb,amsthm,fixmath}
\usepackage{mathtools}
\usepackage{algorithm}
\usepackage{algorithmic}
\usepackage{textcomp}
\usepackage{xcolor}
\usepackage{bm}
\usepackage{pgfplots}
\usepackage{tikz}
\usetikzlibrary{intersections,pgfplots.fillbetween,patterns,spy}
\usepackage{subcaption}
\usepackage{cuted}
\usepackage{tabularx}
\usepackage{soul}
\usepackage{hyperref}
\usepackage{balance}
\usepackage[acronym,shortcuts]{glossaries}
\hypersetup{
	colorlinks=true,       % false: boxed links; true: colored links
	linkcolor=black,          % color of internal links (change box color with linkbordercolor)
	citecolor=black,        % color of links to bibliography
	filecolor=black,         % color of file links
	urlcolor=black        % color of external links
}
\usepackage{cleveref}

\definecolor{mylila}{RGB}{153,50,204}

\newcommand{\op}[1]{{\operatorname{#1}}}
\newcommand{\B}[1]{{\bm{#1}}}
\newcommand{\uproman}[1]{\uppercase\expandafter{\romannumeral#1}}
\newcommand{\E}{\operatorname{E}}
\newcommand{\sign}{\operatorname{sign}}

\newcommand{\intall}{\int_{-\infty}^{\infty}}

\newcommand{\fhr}{f_{\B h|\B r}}
\newcommand{\fh}{f_{\B h}}
\newcommand{\frh}{p_{\B r|\B h}}
\newcommand{\fr}{p_{\B r}}
\newcommand{\dx}{\operatorname{d}\hspace{-1.5pt}}
\newcommand{\h}{^{\operatorname{H}}}
\newcommand{\T}{^{\operatorname{T}}}
\newcommand{\inv}{^{-1}}
\newcommand{\diag}{\operatorname{diag}}
\DeclareMathOperator{\eye}{\mathbf{I}}
\newcommand{\vect}{\operatorname{vec}}
\newcommand{\tr}{\operatorname{tr}}
\newcommand{\sqrtm}{^{-\frac{1}{2}}}
\newcommand{\C}{\mathbb{C}}
\newcommand{\R}{\mathbb{R}}
\newcommand{\Qinv}{\overline{Q}}
\newcommand{\erf}{\operatorname{erf}}
\newcommand{\rr}{r_{\Re}}
\newcommand{\hr}{h_\Re}
\newcommand{\nr}{n_\Re}
\newcommand{\brr}{\B r_\Re}
\newcommand{\bhr}{\B h _\Re}

\DeclareMathOperator*{\argmin}{arg\,min}

\newacronym{mimo}{MIMO}{multiple-input multiple-output}
\newacronym{simo}{SIMO}{single-input multiple-output}
\newacronym{mse}{MSE}{mean square error}
\newacronym{cme}{CME}{conditional mean estimator}
\newacronym{pdf}{PDF}{probability density function}
\newacronym{adc}{ADC}{analog-to-digital converter}
\newacronym{mmse}{MMSE}{minimum mean square error}
%\newacronym{lmmse}{LMMSE}{linear minimum mean square error}
\newacronym{snr}{SNR}{signal-to-noise ratio}
%\newacronym{svd}{SVD}{singular value decomposition}
\newacronym{evd}{EVD}{eigenvalue decomposition}
\newacronym{crb}{CRB}{Cram\'er-Rao bound}
\newacronym{map}{MAP}{maximum a posteriori}
\newacronym{cs}{CS}{compressive sensing}
\newacronym{ls}{LS}{least squares}
\newacronym{awgn}{AWGN}{additive white Gaussian noise}
\newacronym{csi}{CSI}{channel state information}
\newacronym{ml}{ML}{maximum likelihood}
\newacronym{pmf}{PMF}{probability mass function}
\newacronym{cdf}{CDF}{cumulative distribution function}
\newacronym{rv}{RV}{random variable}
\newacronym{gmm}{GMM}{Gaussian mixture model}

\newcolumntype{Y}{>{\centering\arraybackslash}X}

\colorlet{TUMOrange80}{TUMOrange!80}%
\colorlet{TUMOrange60}{TUMOrange!60}%
\colorlet{TUMOrange50}{TUMOrange!50}%
\colorlet{TUMOrange40}{TUMOrange!40}%
\colorlet{TUMOrange20}{TUMOrange!20}%

\colorlet{TUMGreen80}{TUMBeamerGreen!80}%
\colorlet{TUMGreen60}{TUMBeamerGreen!60}%
\colorlet{TUMGreen50}{TUMBeamerGreen!50}%
\colorlet{TUMGreen40}{TUMBeamerGreen!40}%
\colorlet{TUMGreen20}{TUMBeamerGreen!20}%

\colorlet{TUMRed80}{TUMBeamerDarkRed!80}%
\colorlet{TUMRed60}{TUMBeamerDarkRed!60}%
\colorlet{TUMRed50}{TUMBeamerDarkRed!50}%
\colorlet{TUMRed40}{TUMBeamerDarkRed!40}%
\colorlet{TUMRed20}{TUMBeamerDarkRed!20}%

\newtheorem{theorem}{Theorem}
\newtheorem{lemma}{Lemma}

%\usetikzlibrary{external}
%\tikzset{external/system call={latex \tikzexternalcheckshellescape -halt-on-error
%		-interaction=batchmode -jobname "\image" "\texsource";
%		dvips -o "\image".ps "\image".dvi;
%		ps2eps "\image.ps"}}
%\tikzexternalize

\begin{document}
	\bstctlcite{IEEEexample:BSTcontrol}

\title{On the Mean Square Error Optimal Estimator\\ in One-Bit Quantized Systems }

\author{Benedikt Fesl, Michael Koller, and Wolfgang Utschick, \IEEEmembership{Fellow, IEEE} \vspace{-0.7cm}
	\thanks{
		%Manuscript received April 7, 2016; revised June 12, 2016 and July 31,
		%2016; accepted August 10, 2016. Date of publication August 26, 2016;
		%date of current version September 9, 2016.
	}
	\thanks{
		Benedikt Fesl and Michael Koller are with Professur f\"ur Methoden der Signalverarbeitung, Technische Universit\"at M\"unchen, 80333 M\"unchen, Germany  (e-mail: benedikt.fesl@tum.de; michael.koller@tum.de).
		
		Wolfgang Utschick is with Professur f\"ur Methoden der Signalverarbeitung, Technische Universit\"at M\"unchen, 80333 M\"unchen, Germany  (e-mail: utschick@tum.de) and with the Munich Data Science Institute (MDSI).
	}
	%\thanks{Digital Object Identifier ...}
}

%\author{First A. Author, \IEEEmembership{Fellow, IEEE}, Second B. Author, and Third C. Author, Jr., \IEEEmembership{Member, IEEE}
%\thanks{This paragraph of the first footnote will contain the date on which you submitted your paper for review. It will also contain support information, including sponsor and financial support acknowledgment. For example, ``This work was supported in part by the U.S. Department of Commerce under Grant BS123456.'' }
%\thanks{The next few paragraphs should contain the authors' current affiliations, including current address and e-mail. For example, F. A. Author is with the National Institute of Standards and Technology, Boulder, CO 80305 USA (e-mail: author@boulder.nist.gov).}
%\thanks{S. B. Author, Jr., was with Rice University, Houston, TX 77005 USA. He is now with the Department of Physics, Colorado State University, Fort Collins, CO 80523 USA (e-mail: author@lamar.colostate.edu).}}
%
%\markboth{Journal of \LaTeX\ Class Files, Vol. 14, No. 8, August 2015}
%{Shell \MakeLowercase{\textit{et al.}}: Bare Demo of IEEEtran.cls for IEEE Journals}

\maketitle

\begin{abstract}
	This paper investigates the\glsunset{mse} mean square error (MSE)-optimal \glsunset{cme}conditional mean estimator (CME) in one-bit quantized systems in the context of channel estimation with jointly Gaussian inputs. 
	We analyze the relationship of the generally nonlinear \ac{cme} to the linear Bussgang estimator, a well-known method based on Bussgang's theorem. 
	We highlight a novel observation that the Bussgang estimator is equal to the \ac{cme} for different special cases, including the case of univariate Gaussian inputs and the case of multiple pilot signals in the absence of additive noise prior to the quantization. For the general cases we conduct numerical simulations to quantify the gap between the Bussgang estimator and the \ac{cme}. This gap increases for higher dimensions and longer pilot sequences.
	We propose an optimal pilot sequence, motivated by insights from the \ac{cme}, and derive a novel closed-form expression of the \ac{mse} for that case. Afterwards, we find a closed-form limit of the \ac{mse} in the asymptotically large number of pilots regime that also holds for the Bussgang estimator. 
	Lastly, we present numerical experiments for various system parameters and for different performance metrics which illuminate the behavior of the optimal channel estimator in the quantized regime.
	In this context, the well-known stochastic resonance effect that appears in quantized systems can be quantified.
\end{abstract}

\begin{IEEEkeywords}
Bussgang theorem, channel estimation, conditional mean estimation, one-bit quantization, mean square error.
\end{IEEEkeywords}

\IEEEpeerreviewmaketitle

\section{Introduction}
	\begin{figure}[b]
	\onecolumn
	\centering
	\copyright \scriptsize{This work has been submitted to the IEEE for possible publication. Copyright may be transferred without notice, after which this version may no longer be accessible.}
	\vspace{-1.3cm}
	\twocolumn
\end{figure}

\IEEEPARstart{D}{igital} signal processing has great impact on modern communication systems, in which the analog signals undergo quantization through \glsunset{adc}analog-to-digital converters (ADCs) . The resolution of the \acp{adc} determines how much information is preserved after the quantization of the analog signal. In many applications, the extreme case of one-bit quantization is of particular interest, e.g., in the context of lossy compression and rate distortion theory \cite{inf_theory_book,8416707}, wireless sensor networks \cite{8645383}, or audio coding \cite{audio_coding}. In recent years, one-bit quantization gained a lot of interest in \ac{mimo} communication systems, where transmitter and receiver are possibly equipped with a large number of antennas. Since the power consumption of the \acp{adc}, which are needed for every antenna, is growing exponentially with their resolution, one-bit \acp{adc} are considered as a power-efficient solution \cite{761034}. 
A major drawback of a system with one-bit \acp{adc} is the severe information loss which diminishes the capacity \cite{4594988,10.1145/1143549.1143827,6804238,9170516,ISIT2012_mezghani_nossek_ieee.pdf,7247358} and achievable rate  \cite{Li2017,7894211}, and causes signal processing tasks like channel estimation to become challenging \cite{Ivrlac2007,8314750,7439790}.
However, in contrast to multi-bit considerations, the study of one-bit quantization allows for closed-form solutions in many applications which we discuss in the following. Thereafter, the found exact solutions can be leveraged to approximate solutions for higher resolutions.

Remarkably, it is shown in \cite{4594988,10.1145/1143549.1143827} that the capacity is not severely reduced by the coarse quantization at low \acp{snr}. The capacity was further analyzed for high \acp{snr} in \cite{6804238} and for a large number of antennas in \cite{9170516}. 
In \cite{Li2017,7894211}, the achievable uplink throughput and rate is discussed. 
In order to achieve high data rates in quantized systems, accurate estimation of the \ac{csi} is particularly important. In consideration of the highly nonlinear quantization, channel estimation is considerably more demanding, 
leading to an extensive collection of approaches in the literature, e.g., \cite{708938,8683652,9641834,Liu2021,7247358,5456454,7458830,7536730,8466602,9053749,9013211,Mo2018,Dong2021,Zhang2020,Zhu2019,Nguyen2021,9443563}.
The work in \cite{7247358} studies \ac{ls} estimation. In \cite{708938,8683652,9641834,Liu2021}, iterative \ac{ml} methods are proposed, whereas in \cite{5456454,7458830,7536730}, the \ac{map} estimator is analyzed. In \cite{8466602,5456454} the \ac{crb} is investigated. In \cite{9053749,9013211,Mo2018,7536730}, \ac{cs} is studied for channel estimation, and deep learning approaches were investigated in~\cite{Dong2021,Zhang2020,Zhu2019,Nguyen2021,9443563}.

Apart from that, the Bussgang theorem turned out to be helpful for channel estimation \cite{Bussgang}. The theorem, a special case of the more general Price theorem \cite{Price}, states that the auto-correlation and cross-correlation of a zero-mean Gaussian signal before and after it has passed through a nonlinear operation is equal up to a constant. A direct consequence of the theorem is the fact that the signal after the quantization can be decomposed in a linear fashion into a desired signal part and an uncorrelated distortion. This property was investigated for various nonlinearities \cite{bussgang_examples} and further extended to non-zero quantization thresholds \cite{9046300}. In the context of signal processing, the theorem was successfully used to design channel estimators. 
To this end, the linearized model is combined with the linear \ac{mmse} estimator \cite{9307295,Kim2018,9500125,8761531}.
%being the sub-optimal linear \ac{mmse} estimator for the linearized model \cite{9307295,Kim2018,9500125,8761531}.

Despite this extensive variety of channel estimation algorithms, there is so far no detailed discussion about the \ac{cme} for the one-bit quantization case. The \ac{cme} is well-known to be the minimizer for the \ac{mse} loss function, which is the most frequently considered criterion for the channel estimation performance. More generally, the \ac{cme} is the optimal predictor for all Bregman loss functions, of which the \ac{mse} is a special case \cite{1459065}. 
On the one hand, there are approximations of the \ac{cme}, e.g., in wideband systems with many channel taps and antennas where the quantization noise becomes approximately Gaussian \cite{7600443}, or with neural networks that are trained on unquantized data \cite{Zhang2020}. 
On the other hand, the \ac{cme} is known for the simple one-dimensional case for a Gaussian input with a single pilot observation \cite[Sec. 13.1]{inf_theory_book}, \cite{7355388}. However, to our knowledge, there exists no general discussion about the multivariate case with correlated channel entries, or with longer pilot sequences that can not be optimally decorrelated.

In \cite{8761531}, a two-stage (concatenated) \ac{mmse} estimator is proposed, where in the first stage the product of the pilot matrix and the channel is estimated with the \ac{cme}. Afterwards, it is assumed that the remaining error is Gaussian, which is unlikely to hold in general, such that the channel can be recovered by decorrelating with the (pseudo-)inverse pilot matrix. This corresponds to a second stage consisting of a linear \ac{mmse} solution. Interestingly, this approach is shown to be equal to the Bussgang estimator in the one-bit quantization case.

%\subsection{Contributions}
\textit{Contributions:} We discuss the \ac{cme} in systems with one-bit quantization for the general case of multivariate Gaussian channels in the presence of \ac{awgn} and multiple pilot observations. We present closed-form solutions for special cases and numerical computations for the general case. In that respect, we discuss the optimal pilot sequence from the perspective of the \ac{cme} that was independently suggested in \cite{Zhang2020} for learning a neural network based channel estimator. We identify cases in which the Bussgang estimator is equivalent to the \ac{cme} by proving that their closed-form \ac{mse} expressions are equal and we find asymptotic \ac{mse} limits for the large number of pilots regime. 
We then quantify the gap between the linear Bussgang estimator and the nonlinear \ac{cme} in the general case by numerical experiments and show that the gap increases for higher dimensions and longer pilot sequences in the presence of correlations between the channel entries. In this context, the well-known stochastic resonance effect \cite{mcdonnell_stocks_pearce_abbott_2008}, which allows for outperforming the asymptotic limits in the presence of noise, is numerically quantified.
We finally discuss further performance metrics such as the cosine similarity and a lower bound on the achievable rate for a matched filter. 

The remainder of this work is structured as follows. In \Cref{sec:problem_formulation}, the problem formulation is described, in \Cref{sec:bussgang_estimator}, the Bussgang estimator is summarized, and in \Cref{sec:conditional_mean_estimator}, the findings on the \ac{cme} are presented. \Cref{sec:performance_metrics} then discusses further performance metrics, \Cref{sec:num_results} presents the numerical results, and \Cref{sec:conclusion} concludes the work.
\textbf{Notation:} The (element-wise) signum function is denoted by $\sign(\cdot)$. The Kronecker product is denoted by $\otimes$ and the trace of matrix is written as $\tr(\cdot)$. The real and imaginary part of a complex scalar/vector/matrix is stated as $\Re(\cdot)$ and $\Im(\cdot)$, respectively. The (column-wise) vectorization of a matrix which stacks the columns is denoted by $\vect(\cdot)$. 
We denote the \ac{pdf} of a continuous \ac{rv} $\B x$ as $f_{\B x}$ and the \ac{pmf} of a discrete \ac{rv} $\B y$ as $p_{\B y}$. 
We denote by $\mathcal{N}_\C(\B x; \B \mu, \B C)$ the complex Gaussian distribution with mean $\B \mu$ and covariance $\B C$, evaluated at $\B x$, which yields a circularly symmetric \ac{rv} after subtracting its mean.
%We denote the circularly symmetric complex Gaussian distribution with mean $\B \mu$ and covariance $\B C$, evaluated at $\B x$, as~$\mathcal{N}_\C(\B x; \B \mu, \B C)$. 
Similarly, the real-valued Gaussian density is denoted by $\mathcal{N}(\B x; \B \mu, \B C)$.
We denote the Gaussian error function as $\erf(x) = \frac{2}{\sqrt{\pi}}\int_0^x\exp(-t^2)\op d t$.
We denote the uniform distribution between $a$ and $b$ as $\mathcal{U}(a,b)$. The function $\angle(\cdot)$ gives the angle in $[0,2\pi)$ of a complex number.
The $k,\ell$th entry of a matrix $\B X$ is denoted by $[\B X]_{k,\ell}$.

%%%%%%%%%%%%%%%%%%%%%%%%%%%
%%%%%%%%%%%%%%%%%%%%%%%%%%%

\section{Problem Formulation}\label{sec:problem_formulation}
We consider the following generic system equation
\begin{equation}
	\B R = Q(\B Y) =  Q(\B h\B a\T + \B N) \in \mathbb{C}^{N\times M}
	\label{eq:system}
\end{equation}
where $\B R=[\B r_1, \B r_2, \dots,\B r_M]$ contains $M$ quantized observations of the vector of interest $\B h \in\mathbb{C}^N$ with the known pilot vector $\B a \in\mathbb{C}^M$ that fulfills the power constraint $\|\B a\|_2^2= M$. We denote by $\B Y$ the unquantized received signal.
Let the vector $\B h$ be a zero-mean Gaussian \ac{rv}, i.e., $\B h \sim \mathcal{N}_{\mathbb{C}}(\B 0, \B C_{\B h})$.
The desired signal is distorted with additive zero-mean Gaussian noise $\B N = [\B n_1, \B n_2, \dots, \B n_M]$ where $\B n_i\sim\mathcal{N}_{\mathbb{C}}(\B 0, \B C_{\B n})$ and is afterwards quantized with the complex-valued one-bit quantization function
\begin{equation}
	Q(\cdot) = \frac{1}{\sqrt{2}} \left(\sign(\Re(\cdot)) + \op j \sign(\Im(\cdot))\right)
	%Q: \C^{MN} \rightarrow \C^{MN}, \B y \mapsto \frac{1}{\sqrt{2}} \left(\sign(\Re(\cdot)) + j \sign(\Im(\cdot))\right)
	\label{eq:quantization}
\end{equation}
which is applied element-wise to the input vector/matrix. The system model in \eqref{eq:system} can be equivalently described in its (column-wise) vectorized form as
\begin{equation}
	\B r = Q(\B y) = Q(\B A \B h  + \B n) \in \mathbb{C}^{NM}
	\label{eq:system_vec}
\end{equation}
with $\B A = \B a \otimes \eye$, $\B r = \vect(\B R)$, $\B y = \vect(\B Y)$, and $\B n = \vect(\B N)$.
The system model describes, e.g., a \ac{simo} wireless communication scenario where a base station equipped with $N$ antennas receives $M$ pilot signals from a single-antenna user terminal. The \acp{adc} at the receiver have one-bit resolution, which is modeled by the quantization function in \eqref{eq:quantization}.
Although we especially focus on the application of channel estimation in this work, the analysis is not limited to this instance and might yield insights to various applications such as lossy compression \cite{8416707}, wireless sensor networks \cite{8645383}, audio coding \cite{audio_coding}, or control theory \cite{quantized_meas}, 
%or positioning \cite{localization_stein}, 
to name a few.

The overall goal is to estimate the signal $\B h$ in an optimal sense where the vector $\B a$ is known and the channel and noise are both zero-mean Gaussian with known covariances. The optimality depends on the metric of choice, however, the most frequently used criterion in the context of channel estimation is the \ac{mse} between the true channel $\B h$ and its estimate $\hat{\B h}$, i.e.,
\begin{equation}
	\text{MSE} = \op E [\|\B h - \hat{\B h}\|_2^2]
	\label{eq:mse}
\end{equation}
for which the \ac{cme}, defined as 
%The \ac{cme} can be defined as the \ac{mmse} estimator, i.e.,
%\begin{equation}
%	\op E [\B h \mid \B r] = \argmin_{\hat{\B h}} \op E[ \| \hat{\B h}(\B r) - \B h\|_2^2],
%	\label{eq:mse_general}
%\end{equation}
%The \ac{cme} is defined as
\begin{equation}
	\hat{\B h} = \op E [\B h | \B r] = \int \B h \B f_{\B h | \B r}(\B h | \B r) \op d \B h,
	\label{eq:cme_definition}
\end{equation}
is known to yield optimal estimates. 
%Here, $ \B f_{\B h \mid \B r}$ is a conditional \ac{pdf}.
The \ac{cme} is generally not available since it requires full knowledge of the conditional \ac{pdf} $\fhr$. Furthermore, even if the \ac{pdf} is known, the \ac{cme} does generally neither have a closed-form solution nor is it feasible to compute numerically in real-time systems. For this reason, the \ac{cme} is usually approximated by sub-optimal approaches with a practically reasonable trade-off between performance and complexity, e.g., using the linear \ac{mmse} estimator. In a jointly Gaussian setting, the linear \ac{mmse} estimator is well-known to be the \ac{cme}. However, in the quantized case, this statement is no longer true in general.
Nonetheless, for a zero-mean jointly Gaussian input, the Bussgang decomposition, based on Bussgang's theorem \cite{Bussgang}, can be leveraged to design a linear \ac{mmse} estimator in the quantized case. The Bussgang estimator is sub-optimal in general due to its linearity. However, up to now there has been no analysis of the connection between the Bussgang estimator and the \ac{cme} in the case of one-bit quantization.

In this work, we investigate the \ac{mse} optimality of the Bussgang estimator and identify cases in which the Bussgang estimator is equal to the \ac{cme} and we show when the Bussgang estimator differs from the \ac{cme}. To this end, after revising the Bussgang estimator, we derive the \ac{cme} for one-bit quantization in the multivariate case. Simulation results based on the derived quantities verify the discussion.

%%%%%%%%%%%%%%%%%%%%%%%%%%%
%%%%%%%%%%%%%%%%%%%%%%%%%%%

\section{Bussgang Estimator}\label{sec:bussgang_estimator}
In this section, we briefly revise the linear \ac{mmse} estimator based on the Bussgang decomposition which is a direct consequence of Bussgang's theorem \cite{Bussgang}.  
In particular, the Bussgang decomposition implies that the system in \eqref{eq:system_vec} can be written as a linear combination of the desired signal part and an uncorrelated distortion $\B q$ as
\begin{equation}
	\B r = Q(\B y) = \B B \B y + \B \eta = \B B\B A\B h + \B q,
	\label{eq:Bussgang_decomp}
\end{equation}
where $\B B$ is the Bussgang gain that can be obtained from the linear \ac{mmse} estimation of $\B r$ from $\B y$ as
\begin{equation}
	\B B = \B C_{\B r \B y}\B C_{\B y}\inv = \sqrt{\frac{2}{\pi}}\diag(\B C_{\B y})\sqrtm,
	\label{eq:bussgang_gain}
\end{equation}
cf. \cite[Sec. 9.2]{Papoulis1965ProbabilityRV},
and where the distortion term $\B q = \B B \B n + \B \eta$ contains both the \ac{awgn} $\B n$ and the quantization noise $\B \eta$.  
As the statistically equivalent model \eqref{eq:Bussgang_decomp} is linear, one can formulate the linear \ac{mmse} estimator 
\begin{equation}
	\hat{\B h}_{\text{buss}} = \B C_{\B h\B r} \B C_{\B r}\inv\B r.
	\label{eq:h_buss}
\end{equation}
The cross-correlation matrix between the channel and the received signal is calculated as
\begin{equation}
	\B C_{\B h\B r} = \op{E}[\B h(\B B \B A \B h + \B q)\h] = \B C_{\B h}\B A\h\B B\h
\end{equation}
which follows from the fact that the noise term $\B q$ is uncorrelated with the channel $\B h$, see \cite[Appendix A]{Li2017}. 
The auto-correlation matrix $\B C_{\B r}$ is calculated via the so-called arcsine law \cite{272490} as 
\begin{equation}
%	\begin{aligned}
		\B C_{\B r} = \frac{2}{\pi}\left( \arcsin\left(\B\Psi\Re(\B C_{\B y})\B\Psi\right) + \op j \arcsin\left(\B\Psi\Im(\B C_{\B y})\B\Psi\right)\right)
		\label{eq:arcsine_law}
%	\end{aligned}
\end{equation}
where $\B\Psi = \diag(\B C_{\B y})\sqrtm$.
%Summarizing, the estimate is given as
%\begin{align}
%	\hat{\B h}_{\text{buss}} = 
%\end{align}
Further reading on the Bussgang estimator can be found in \cite{Li2017,9307295}.

In \cite{8761531}, it is discussed that the Bussgang estimator is equal to a two-stage estimator that first computes the \ac{cme} of the product $\B A \B h$ of the pilot matrix and the channel, and afterwards recovers the channel $\B h$. Since the resulting error after the first step is non-Gaussian, the second step is non-optimal from the \ac{cme} perspective. In the following, we discuss cases in which the Bussgang estimator is indeed equal to the \ac{cme}, and we show the differences between the \ac{cme} and the Bussgang estimator in the general case.

%%%%%%%%%%%%%%%%%%%%%%%%%%%
%%%%%%%%%%%%%%%%%%%%%%%%%%%

\section{Conditional Mean Estimator}\label{sec:conditional_mean_estimator}
In this section, we derive the \ac{cme} for one-bit quantization for the general system in \eqref{eq:system_vec}. Afterwards, we discuss the \ac{cme} for different simplifications, i.e., with and without \ac{awgn} or multiple pilots, and for the uni- as well as multivariate case, where these cases are differently combined to provide a general overview. For each instance, we discuss possible closed-form solutions of the \ac{cme} and its relationship to the Bussgang estimator, where we identify cases in which equality between these estimators holds. 
We start by rewriting the \ac{cme} from \eqref{eq:cme_definition} as
\begin{align}
	\E[\B h | \B r] 
	&= \int \B h \frac{\frh(\B r|\B h)\fh(\B h)}{\fr(\B r)} \dx \B h \\
	&= \frac{1}{\fr(\B r)} \int \B h \fh(\B h)  \frh(\B r|\B h) \dx \B h
		\label{eq:cme_general}
\end{align}
by making use of Bayes' theorem. It is known that $f_{\B h} = \mathcal{N}_\C(\B 0, \B C_{\B h})$. In the following, we also find expressions for the prior and conditional probabilities.
We first define the set of all vectors $\B y$ which are mapped onto a given observation $\B r$ as 
\begin{equation}
	\overline{Q}(\B r) = \left\{\B y\in\mathbb{C}^{MN}: Q(\B y) = \B r\right\}.
\end{equation} 
Integrating the \ac{pdf} $f_{\B y}$ over this set yields the prior probability of observing the respective quantized observation, i.e.,
\begin{align}
	\fr(\B r) &= \int_{\Qinv(\B r)} f_{\B y}(\B y)\dx \B y,
	\label{eq:p_r}
\end{align}
where due to the joint Gaussianity we have $f_{\B y} = \mathcal{N}_{\C}(\B 0,\B C_{\B y})$ with $\B C_{\B y} =  \B A \B C_{\B h} \B A\h + \B C_{\B n}$, cf. \eqref{eq:system_vec}.
The conditional probability $\frh(\B r | \B h)$ is found similarly, however, now the integrand changes to $\mathcal{N}_\C(\B A\B h, \B C_{\B n})$, where $\B h$ is no longer a \ac{rv}:
\begin{align}
	\frh(\B r|\B h) = \int_{\Qinv(\B r)} \mathcal{N}_\C(\B x;\B A \B h, \B C_{\B n}) \dx \B x.
	\label{eq:p_r_given_h_noisy}
\end{align}
Plugging \eqref{eq:p_r} and \eqref{eq:p_r_given_h_noisy} into \eqref{eq:cme_general} yields a generally computable expression of the \ac{cme} for the model in \eqref{eq:system_vec}.

\subsection{Numerical Integration}\label{subsec:numeric}
For the most general case, the integrals in \eqref{eq:cme_general}, \eqref{eq:p_r}, and \eqref{eq:p_r_given_h_noisy} do neither have a closed-form solution, nor can they be simplified. In this case, the \ac{cme} has to be found by numerical integration. However, the number of solutions for the discrete number of possible observations $\B r\in\C^{MN}$ grows exponentially with $4^{MN}$, not to mention the increasing complexity for each individual solution.
 Since in \eqref{eq:p_r} and \eqref{eq:p_r_given_h_noisy} we have integrals over Gaussian \acp{pdf}, there exist computationally efficient algorithms. In this paper, the numerical integration of the Gaussian density is based on an algorithm from \cite{Genz1992} which first transforms the integral into an integration over the unit hypercube by combining a Cholesky decomposition of the covariance matrix with a priority ordering of the integration variables. The transformed integral is then approximated based on a Monte Carlo method which is implemented in \cite{2020SciPy}.
In the following, we discuss special instances of the general model \eqref{eq:system_vec} that allow for a simplified computation of the \ac{cme} \eqref{eq:cme_general}.

%\color{red} TODO: split real and imaginary integral?

%\color{black}

%To summarize, for computing the \ac{cme}, we have to solve 
%\begin{equation}
%	\op E[\B h \mid \B r] = \frac{1}{\int_{\Qinv(\B r)} f_{\B y}(\B y)\dx \B y}\int \B h  \int_{\Qinv(\B r)} \mathcal{N}_\C(\B x;\B A \B h, \B C_{\B n}) \dx \B x \op d \B h
%\end{equation}

\subsection{Univariate Case with a Single Observation}\label{subsec:univariate_cme}
In the following, the case of a scalar $N=1$ system with a single pilot observation, i.e., $M=1$ and $a=1$, is considered.
Since the quantization function in \eqref{eq:quantization} is applied independently to the real and imaginary part of its input, deriving only the real part formula of the \ac{cme} is sufficient here since the imaginary part is computed analogously.
To this end we consider $\hr\sim\mathcal{N}(0, \sigma^2/2)$, $\nr \sim \mathcal{N}(0,\eta^2/2)$, and for the ease of notation we define $\rr = \sign(\hr + \nr)$.

 Because of the symmetry of the zero-mean Gaussian density, we get $p_{\rr}(\rr) = 1/2$.
%\begin{equation}
%	p_r(r) = \frac{1}{4}.
%\end{equation}
Furthermore, the integral in \eqref{eq:p_r_given_h_noisy} can be computed in closed-form to, cf. \cite{Owen1980ATO},
\begin{align}
	p_{\rr | \hr} (\rr=1 | \hr) &= \int_0^\infty \mathcal{N}(x;\hr,\eta^2/2)\dx x \\
	&= \frac{1}{2} \left( 1 + \erf\left(\frac{\hr}{\eta}\right)\right),
	\label{eq:univariate_mid1} 
	\end{align}
\begin{align}
	p_{\rr | \hr} (\rr=-1 | \hr) &= \int_{-\infty}^0 \mathcal{N}(x;\hr,\eta^2/2)\dx x \\
	&= \frac{1}{2} \left( 1 - \erf\left(\frac{\hr}{\eta}\right)\right).
	\label{eq:univariate_mid2}
\end{align}
We can summarize \eqref{eq:univariate_mid1} and \eqref{eq:univariate_mid2} as
\begin{equation}
		p_{\rr | \hr} (\rr| \hr) = \frac{1}{2} \left( 1 + \rr \erf\left(\frac{\hr}{\eta}\right)\right).
\end{equation}
Plugging this result into \eqref{eq:cme_general} yields, cf. \cite[eq. (2.7.2.4)]{Integrals_err},
\begin{align}
	\op E [\hr | \rr] &= 
		%\intall hf_h(h)\dx h + \intall h \rr f_h(h)\erf\left(\frac{h}{\sqrt{2}\eta}\right)\dx h \\
	    \intall h f_{h_\Re}(h)\left( 1 + \rr\erf\left(\frac{h}{\eta}\right)\right)\dx h \\
		&= \frac{1}{\sqrt{\pi}}\frac{\sigma^2}{\sqrt{\sigma^2 + \eta^2}}\rr.
\end{align}
This can now easily be extended to the complex case with $h\sim\mathcal{N}_\C(0,\sigma^2)$, $n\sim\mathcal{N}_\C(0,\eta^2)$, and $r = Q(h+n)$ which reads as
\begin{align}
	\E[h | r]
	&=  \sqrt{\frac{2}{\pi}}\frac{\sigma^2}{\sqrt{\sigma^2 + \eta^2}}r.
	\label{eq:cme_univariate}
\end{align}
%\begin{equation}
%\begin{aligned}
%		p_{r|h}( r| h) = %\frac{1}{4}\left(1+\erf\left(\frac{\Re(h)}{\sigma_n}\right)\right)\left(1+\erf\left(\frac{\Im(h)}{\sigma_n}\right)\right)
%		\frac{1}{4}(1+\sign(r_\Re)\erf(h_\Re/\sigma_n)) \\
%						(1+\sign(r_\Im)\erf(h_\Im/\sigma_n)).
%\end{aligned}
%\end{equation}
The closed-form \ac{mse} from \eqref{eq:mse} is then computed to
\begin{align}
	\operatorname{MSE}
	&= \sigma^2\left(1 - \frac{2}{\pi}\frac{\sigma^2}{\sigma^2 + \eta^2}\right).
	\label{eq:mse_univariate}
\end{align}

Interestingly, if we evaluate the Bussgang estimator in \eqref{eq:h_buss}, we get the exact same \ac{cme} and \ac{mse} expressions. To conclude, for this case the Bussgang estimator is indeed the \ac{cme}. 
Note that in the multivariate case with an identity pilot matrix $\B A = \eye$ and uncorrelated channels $\B C_{\B h} = \diag(\B c_\B h)$ as well as uncorrelated noise $\B C_{\B n} = \diag(\B c_{\B n})$, the \acp{pdf} in \eqref{eq:p_r} and \eqref{eq:p_r_given_h_noisy} can be decomposed into the product of their marginals. Thus, the \ac{cme} from \eqref{eq:cme_univariate} can be applied element-wise and is therefore also equal to the Bussgang estimator.
The same expression as in \eqref{eq:cme_univariate} was found in \cite{7355388}, more generally for non zero-mean channels. However, the Bussgang theorem from \cite{Bussgang} can not be applied for non zero-mean \acp{rv} and thus, the equality does not hold in general.

Additionally, we can evaluate the well-known linear \ac{mmse} estimator for unquantized observations where it is the \ac{cme}. There, the closed-form \ac{mse} is simply 
\begin{equation}
	\text{MSE} = \sigma^2\left(1 - \frac{\sigma^2}{\sigma^2 + \eta^2}\right).
	\label{eq:mse_unquantized}
\end{equation}
We can now compute the necessary \ac{snr} for the unquantized system to achieve the same \ac{mse} as the one-bit quantized system without \ac{awgn} which we will analyze later in the numerical experiments section.

\subsection{Multivariate Noiseless Case with a Single Observation}\label{subsec:multivariate_noiseless_cme}
In this case, we consider $M=1$ with $\B n = \B 0$. Similarly as in \Cref{subsec:univariate_cme}, it is sufficient to derive only the real part of the \ac{cme}. To this end, we consider $\brr = \sign(\bhr)$ with $\bhr\sim\mathcal{N}(\B 0, \B C_{\B h})$ where $\B C_{\B h}\in\R^{N\times N}$ is a full matrix. We abuse the notation for $\Qinv(\brr) = \{\bhr \in \R^N: \sign(\bhr) = \brr\}$.
In the following Theorem \ref{prop:cme_multivariate}, a simplified expression of the general \ac{cme} \eqref{eq:cme_general} is derived whose advantage is discussed afterwards.

\begin{theorem}\label{prop:cme_multivariate}
	The \ac{cme} for the above system can be computed element-wise as
	\begin{equation}
		\begin{aligned}
			&\left[\E[\bhr | \brr]\right]_i = \frac{1}{p_{\brr}(\brr)} 
			\sum_{n=1}^N [\B r_{\Re}]_n [\B C_{\B h}]_{i,n} \\
			&~~~~~~~~~~~~~\mathcal{N}(0;0,[\B C_{\B h}]_{n,n})
			\int_{\Qinv(\brr^n)}\mathcal{N}^n(\B x; \B 0, \B C_{\B h})\dx \B x,
			\label{eq:cme_noiseless_final}
		\end{aligned}
	\end{equation}
where $\mathcal{N}^n(\B 0, \B C_{\B h})$ is the ($N-1$)-dimensional Gaussian \ac{pdf} where the $n$th row and column of $\B C_{\B h}\inv$ is deleted. The superscript $n$ of a vector truncates its $n$th element.
\end{theorem}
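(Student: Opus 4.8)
The plan is to start from the general Bayes expression \eqref{eq:cme_general} specialized to the noiseless, single-pilot, multivariate real case, where $\frh(\B r | \B h)$ degenerates to an indicator $\mathbf 1[\bhr \in \Qinv(\brr)]$ since there is no noise. Thus the $i$th component becomes
\begin{equation}
	\left[\E[\bhr | \brr]\right]_i = \frac{1}{p_{\brr}(\brr)} \int_{\Qinv(\brr)} [\B x]_i\, \mathcal{N}(\B x; \B 0, \B C_{\B h})\dx \B x.
	\label{eq:planstart}
\end{equation}
The key algebraic device is the standard Gaussian integration-by-parts identity $[\B x]_i\, \mathcal{N}(\B x;\B 0,\B C) = \sum_{n} [\B C]_{i,n}\,\bigl(-\partial_{[\B x]_n}\bigr)\mathcal{N}(\B x;\B 0,\B C)$, which lets me trade the polynomial prefactor $[\B x]_i$ for a gradient of the density. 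I would substitute this into \eqref{eq:planstart}, interchange sum and integral, and then evaluate each term $\int_{\Qinv(\brr)} \partial_{[\B x]_n}\mathcal{N}(\B x;\B 0,\B C_{\B h})\dx\B x$ by the divergence/fundamental-theorem-of-calculus argument: the orthant $\Qinv(\brr) = \{\B x : \sign([\B x]_n) = [\brr]_n \ \forall n\}$ has boundary pieces lying in the coordinate hyperplanes $[\B x]_n = 0$, so integrating the $n$th partial derivative in the $[\B x]_n$ direction collapses the $n$th coordinate to $0$ and leaves an $(N-1)$-dimensional integral over the face, with a sign $[\brr]_n$ from the orientation of the outward normal.

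Next I would carry out the reduction of the resulting boundary integral. Setting $[\B x]_n = 0$ factors the $N$-dimensional Gaussian density into the marginal value $\mathcal{N}(0;0,[\B C_{\B h}]_{n,n})$ times the conditional density of the remaining $N-1$ coordinates given $[\B x]_n = 0$; the latter is precisely the Gaussian whose inverse covariance is the Schur complement, i.e. $\B C_{\B h}\inv$ with its $n$th row and column deleted — this is the object the theorem calls $\mathcal{N}^n(\B 0, \B C_{\B h})$. The region of integration correspondingly becomes the orthant $\Qinv(\brr^n)$ in the remaining coordinates, where $\brr^n$ drops the (now fixed) $n$th sign. Collecting the factor $[\brr]_n$, the entry $[\B C_{\B h}]_{i,n}$, the marginal $\mathcal{N}(0;0,[\B C_{\B h}]_{n,n})$, and the reduced integral, and summing over $n$, yields exactly \eqref{eq:cme_noiseless_final}.

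I expect the main obstacle to be getting the covariance of the conditioned $(N-1)$-dimensional Gaussian stated correctly and matching it to the paper's "delete the $n$th row and column of $\B C_{\B h}\inv$" description: one must invoke the block-matrix inverse identity so that conditioning on $[\B x]_n = 0$ (which naïvely involves a Schur complement of $\B C_{\B h}$) is re-expressed as simply a principal submatrix of the precision matrix $\B C_{\B h}\inv$. A secondary technical point worth stating carefully is the justification for differentiating under the integral sign and for the boundary evaluation — the Gaussian tails vanish, so the only surviving boundary contribution is the coordinate hyperplane $[\B x]_n=0$, and the pieces at infinity contribute nothing. Once these two points are handled, the remainder is bookkeeping of signs and normalization constants.
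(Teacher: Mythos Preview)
Your proposal is correct. Both you and the paper reduce the CME to the orthant integral \(\int_{\Qinv(\brr)}\bhr f_{\bhr}(\bhr)\dx\bhr\) via the indicator observation, and both identify this as the (scaled) mean of a truncated multivariate Gaussian. From that point, however, the paper simply \emph{cites} the known moment formula for a one-sided truncated Gaussian (Tallis~1961, Cartinhour~1990) and reads off the expression, whereas you supply a self-contained derivation: you use the Stein-type identity \([\B x]_i\,\mathcal{N}(\B x;\B 0,\B C)=-\sum_n[\B C]_{i,n}\partial_{[\B x]_n}\mathcal{N}(\B x;\B 0,\B C)\), integrate each term via the fundamental theorem of calculus in the \(n\)th coordinate, and then factor the boundary density at \([\B x]_n=0\) into the marginal \(\mathcal{N}(0;0,[\B C_{\B h}]_{n,n})\) times the conditional \(\mathcal{N}^n\). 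This is essentially how the cited references prove their formula, so your route is the ``unboxed'' version of the paper's citation. The trade-off: the paper's proof is two lines long but opaque; yours is longer but fully transparent and makes explicit why the precision-submatrix description of \(\mathcal{N}^n\) is the right one (via the Schur-complement/conditioning identity you flag). The two technical checks you highlight---matching the conditional covariance to the deletion of the \(n\)th row/column of \(\B C_{\B h}\inv\), and the vanishing of the Gaussian at the infinite boundary faces---are both routine and exactly the right things to verify.
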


\textit{Proof:} See \Cref{app:proof_cme_multivariate}.

The advantage of the integral in \eqref{eq:cme_noiseless_final} is that, in contrast to the double integral that appears in \eqref{eq:cme_general}, cf. \eqref{eq:p_r_given_h_noisy}, it can be efficiently computed by means of \cite{Genz1992} as described in \Cref{subsec:numeric} because there appear only integrals over Gaussian \acp{pdf}.
Note that the expression \eqref{eq:cme_noiseless_final} simplifies to the element-wise \ac{cme} as computed in \eqref{eq:cme_univariate} for diagonal covariances since then $[\B C_{\B h}]_{i,n}^{i\neq n} = 0$ and the $(N-1)$-dimensional integral equals $2^{-(N-1)}$ due to symmetry. This analysis can be interpreted as an asymptotic \ac{snr} analysis in which the \ac{awgn} goes to zero.

The extension to the complex-valued case is straightforward by stacking the real and imaginary parts of each complex vector and using the real-valued representation $\check{\B C}_{\B h}\in \mathbb{R}^{2N\times 2N}$ of the covariance matrix $\B C_{\B h} \in \mathbb{C}^{N\times N}$, cf. \cite[Sec. 15.4]{est_theory}, given as
\begin{align}
	\check{\B C}_{\B h} = \begin{bmatrix}\Re(\B C_{\B h}) &-\Im(\B C_{\B h}) \\ \Im(\B C_{\B h}) & \Re(\B C_{\B h})\end{bmatrix}.
\end{align}

In contrast to the univariate case in \Cref{subsec:univariate_cme}, the resulting \ac{cme} expression \eqref{eq:cme_noiseless_final} is nonlinear and no longer in closed-form, which is clearly a difference to the linear Bussgang estimator. We will present numerical evaluations for this case that reveal a performance gap between the estimators.

\subsection{Univariate Noiseless Case with Multiple Observations}\label{subsec:multiple_pilots}
In this case, we consider $N=1$ with $\B n=\B 0$, i.e., the system from \eqref{eq:system_vec} simplifies to $\B r = Q(\B a h)$ with $h\sim \mathcal{N}_\C(0,\sigma^2)$.
Note that in this case it is not sufficient to compute the real-valued \ac{cme} because of the product of channel and pilot vector. This was the reason for the two-stage approach in \cite{8761531}.
Opposite to the unquantized case, having multiple pilot observations of the same realization is helpful in the context of quantization as more information can be recovered. 
Because the amplitude of the received signal is lost after the one-bit quantizer, having varying amplitude in the pilot sequence $\B a$ has no influence on the observed signal.
 However, one can introduce phase shifts in order to increase the obtained information.

In the following, we discuss the optimal pilot sequence and the resulting \ac{cme} based on it. We start by rewriting the \ac{cme} in terms of the magnitude and phase of the channel $h = \alpha \op e^{\op j \theta}$ where, due to the circular symmetry, the phase is uniformly distributed as $\theta \sim \mathcal{U}(0,2\pi)$ and is independent of the Rayleigh distributed magnitude, cf. \cite[Appendix A.1.3]{tse_viswanath_2005}. Consequently, we obtain
\begin{align}
	\op E[ \alpha \op e^{\op j \theta} | \B r]
	 = \op E[\alpha| \B  r] \hspace{1pt}\op E[ \op e^{\op j \theta} | \B r] 
	= \op E[\alpha]\hspace{1pt} \op E[ \op e^{\op j \theta} | \B r],
	\label{eq:cme_pilots_split}
\end{align}
since the magnitude information is lost due to the one-bit quantization and its estimate is thus independent of the received signal.
Therefore, we can set the magnitude of each pilot to one without loss of generality in order to fulfill the power constraint $\|\B a\|_2^2 = M$.
It is shown in the following that a recovery of the phase information of the channel depending on the number of pilots $M$ up to a certain resolution can be achieved.

\begin{figure}[t]
	\centering
	\resizebox{1\columnwidth}{!}{
		\begin{tikzpicture}[scale=1]
			%draw coordinate axes
			\draw [->,thick,name path=xax] (-0.5,0)--(3,0) node (xaxis) [below] {$\Re$};
			\draw [->,thick,name path=yax] (0,-0.5)--(0,3) node (yaxis) [left] {$\Im$};
			\filldraw[fill=gray,opacity=0.2,draw=white] (0,0) --  (90:3) arc(90:0:3) -- cycle;
			\draw (2,-0.1) -- (2,0.1);
			\node[below] at (2,0) {$\frac{1}{\sqrt{2}}$};
			\draw (-0.1,2) -- (0.1,2);
			\node[left] at (0,2) {$\frac{1}{\sqrt{2}}$};
			\node at (2.9,0.2) {\scriptsize $\phi_{\text{low}}$};
			\node at (0.35,3) {\scriptsize $\phi_{\text{high}}$};
			%draw quantization points
			\foreach \y in {2}
			\foreach \x in {2}
			\fill[black] (\x,\y) circle (3pt);
			\node at (1.75,2.1) {\small $q_1$};
			%draw pilot-based intersections
			\coordinate (O) at (0,0);
			%draw unquantized pilot symbols
			\path (O) ++(20:2.1) node (P1) {};
			\draw[fill=TUMBeamerBlue] (P1) circle (2pt) node[below] {\scriptsize $[\B y]_1$};
		\end{tikzpicture}
		%%%%%%%%%%%%%%%%%%%%%%%
		%%%%%%%%%%%%%%%%%%%%%%%
		%%%%%%%%%%%%%%%%%%%%%%%
		\begin{tikzpicture}[scale=1]
			%draw coordinate axes
			\draw [->,thick,name path=xax] (-0.5,0)--(3,0) node (xaxis) [below] {$\Re$};
			%	\node at (3.5,0.3) {};
			\draw [->,thick,name path=yax] (0,-0.5)--(0,3) node (yaxis) [left] {$\Im$};
			\filldraw[fill=gray,opacity=0.2,draw=white] (0,0) --  (60:3) arc(60:0:3) -- cycle;
			\node at (2.9,0.2) {\scriptsize $\phi_{\text{low}}$};
			\node at (1.8,2.6) {\scriptsize $\phi_{\text{high}}$};
			\draw (2,-0.1) -- (2,0.1);
			\node[below] at (2,0) {$\frac{1}{\sqrt{2}}$};
			\draw (-0.1,2) -- (0.1,2);
			\node[left] at (0,2) {$\frac{1}{\sqrt{2}}$};
			%draw quantization points
			\foreach \y in {2}
			\foreach \x in {2}
			\fill[black] (\x,\y) circle (2.5pt);
			\node at (1.75,2.1) {\small $q_1$};
			%draw pilot-based intersections
			\coordinate (O) at (0,0);
			%	\draw[name path=L1,dashed] (O)--(30:3);
			\draw[name path=L2,dashed] (O)--(60:3) node[above] {};
			%	\tikzfillbetween[of=L2 and xax]{color=gray, opacity=0.2};
			
			%draw unquantized pilot symbols
			\path (O) ++(20:2.1) node (P1) {};
			\draw[fill=TUMBeamerBlue] (P1) circle (2pt) node[below] {\scriptsize $[\B y]_1$};
			\path (O) ++(50:2.1) node (P2) {};
			\draw[fill=TUMBeamerBlue] (P2) circle (2pt) node[below] {\scriptsize $[\B y]_2$};
			\draw [->,black] (P1.north east) to [out=100,in=-20] node[right] {\scriptsize $\psi_2$} (P2.north east);
			\draw [->,black] (0,1) to [out=0,in=150] node[above] {\scriptsize $\psi_2$} (0.5,0.866);
			%	\path (O) ++(80:2.1) node (P3) {};
			%	\coordinate (t) at (3.8,0) node {};
		\end{tikzpicture}
		%%%%%%%%%%%%%%%%%%%%%%%
		%%%%%%%%%%%%%%%%%%%%%%%
		%%%%%%%%%%%%%%%%%%%%%%%
		\begin{tikzpicture}[scale=1]
			%draw coordinate axes
			\draw [->,thick,name path=xax] (-0.5,0)--(3,0) node (xaxis) [below] {$\Re$};
			%	\node at (3.5,0.3) {};
			\draw [->,thick,name path=yax] (0,-0.5)--(0,3) node (yaxis) [left] {$\Im$};
			\filldraw[fill=gray,opacity=0.2,draw=white] (0,0) --  (30:3) arc(30:0:3) -- cycle;
			\draw (2,-0.1) -- (2,0.1);
			\node[below] at (2,0) {$\frac{1}{\sqrt{2}}$};
			\draw (-0.1,2) -- (0.1,2);
			\node[left] at (0,2) {$\frac{1}{\sqrt{2}}$};
			\node at (2.9,0.2) {\scriptsize $\phi_{\text{low}}$};
			\node at (2.75,1.3) {\scriptsize $\phi_{\text{high}}$};
			%draw quantization points
			\foreach \y in {2}
			\foreach \x in {2}
			\fill[black] (\x,\y) circle (2.5pt);
			\node at (1.75,2.1) {\small $q_1$};
			%draw pilot-based intersections
			\coordinate (O) at (0,0);
			\draw[name path=L1,dashed] (O)--(30:3);
			%	\draw[name path=L2,dashed] (O)--(60:3) node[above] {};
			%draw unquantized pilot symbols
			\path (O) ++(20:2.1) node (P1) {};
			\draw[fill=TUMBeamerBlue] (P1) circle (2pt) node[below] {\scriptsize $[\B y]_1$};
			\path (O) ++(50:2.1) node (P2) {};
			\draw[fill=TUMBeamerBlue] (P2) circle (2pt) node[below] {\scriptsize $[\B y]_2$};
			\path (O) ++(80:2.1) node (P3) {};
			\draw[fill=TUMBeamerBlue] (P3) circle (2pt) node[below] {\scriptsize $[\B y]_3$};
			%	\draw [->,black] (P1.north) to [out=110,in=-35] (P2.east);
			\draw [->,black] (P1.north east) to [out=105,in=-0] node[right] {\scriptsize $\psi_3$} (P3.north east);
			\draw [->,black] (0,1) to [out=0,in=120] node[above] {\scriptsize $\psi_3$} (0.866,0.5);
			%	\coordinate (t) at (3.8,0) node {};
		\end{tikzpicture}
	}
	\caption{Illustration of the iterative restriction of the angular range of the channel's phase and determination of the sub-sector by means of $\phi_{\text{low}}(\B r)$ and $\phi_{\text{high}}(\B r)$ for the case of $M=3$ pilots. 
	}
	\label{fig:pilots_sketch2}
\end{figure}
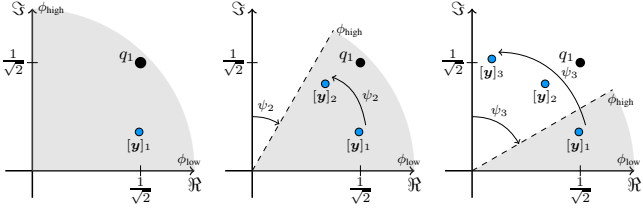

The key consideration thereby is to consecutively send pilots with an increasing phase shift and detect whether or not the quantization label is changing, which allows to narrow down the angular range of the channel's phase.
We first note that there exist four quantization points in the complex plane, each representing a quadrant with an angle of $\pi/2$. 
We further assume without loss of generality that the first pilot is sent without a phase shift, which determines in which of the four quadrants with an angular range of $\pi/2$ the channel lies. 
In turn, if the subsequent pilots have a phase shift up to $\pi/2$ the quantization label will ultimately change once the rotated channel ends up in the neighboring main quadrant.  
Building on this, an optimal pilot sequence is of the form 
\begin{equation}
	[\B a]_m = \exp(\op j\psi_m),~\psi_m\in [0,\tfrac{\pi}{2}),~m=1,\dots,M,
	\label{eq:pilot_vec_generic}
\end{equation}
which we assume to be ordered as $\psi_1 < \dots < \psi_M$.

Put differently, the pilots $m=2,...,M$ divide each of the four main sectors into $M$ circular sub-sectors if $\psi_m\neq 0~\forall m=2,...,M$. After observing a receive signal $\B r$, we would like to determine in which of the sub-sectors the channel lies. More precisely, we would like to associate a sub-sector, described by the angles $\phi_{\text{low}}(\B r)$ and $\phi_{\text{high}}(\B r)$, to every possible receive signal $\B r$ such that if we receive $\B r$, we know that the channel lies in the sub-sector described by $\phi_{\text{low}}(\B r)$ and $\phi_{\text{high}}(\B r)$.

In Fig. \ref{fig:pilots_sketch2}, we exemplarily show how to determine the sub-sector of interest for $M=3$ pilots. We therefore assume that the pilots have equidistant phase shifts in $[0, \pi/2)$ which we prove to be \ac{mse}-optimal later in Theorem \ref{theorem:pilot_opt}. The leftmost picture shows that the first unquantized receive signal $[\B y]_1 = [\B a]_1h$ lies in the positive quadrant and thus the corresponding quantized receive signal is $[\B r]_1 = Q([\B y]_1) = q_1$ which determines the main sector by means of $\phi_{\text{low}}$ and $\phi_{\text{high}}$. The middle picture shows the situation after the second pilot is sent. Since the rotated channel $[\B y]_2 = [\B a]_2h$ yields the same quantization label $[\B r]_2 = Q([\B y]_2) = q_1$, one can restrict the sector by reducing the angle $\phi_{\text{high}}$ by the angle of $\psi_2$, due to fact that if the channel would lie in the cropped area the rotated channel would yield the neighboring quantization label. The rightmost picture shows that the last receive signal is again $[\B r]_3 = q_1$, which means that the sector can be further restricted by the same argumentation. This determines the final boundary angles $\phi_{\text{low}}$ and $\phi_{\text{high}}$. In the same way as in the example of Fig. 1 for a different sequence where a change of the quantization label occurs for a certain pilot, the lower boundary angle $\phi_{\text{low}}$ is increased, ultimately yielding a sub-sector of the same size. This procedure is summarized formally in Algorithm \ref{alg:angle_boundary}.

\begin{algorithm}[t]
	\caption{Finding the boundary angles $\phi_{\text{low}}(\B r)$ and $\phi_{\text{high}}(\B r)$.}
	\label{alg:angle_boundary}
	\renewcommand{\algorithmicendfor}{\textbf{end}}
	\renewcommand{\algorithmicendwhile}{\textbf{end}}
	\renewcommand{\algorithmicendif}{\textbf{end}}
	\newcommand{\algorithmicbreak}{\textbf{break}}
	\newcommand{\BREAK}{\STATE \algorithmicbreak}
	\begin{algorithmic}[1]
		\STATE  Set the boundary angles to the quadrant boundaries: \\
			$\phi_{\text{low~}} = \angle([\B r]_1) - \frac{\pi}{4}$, $\phi_{\text{high}} = \phi_{\text{high}}^{\text{init}} = \angle([\B r]_1) + \frac{\pi}{4}$
		\FOR{$m=2$ to $M$}
		\IF{$[\B r]_m == [\B r]_{m-1}$}
		\STATE $\phi_{\text{high}} = \phi_{\text{high}}^{\text{init}} -\psi_m$
		%			\STATE 
		\ELSE
		\STATE $\phi_{\text{low}} = \phi_{\text{high}} -\psi_m$
		\BREAK
		\ENDIF
		\ENDFOR
		
		%		\WHILE{$[\B r]_m = [\B r]_{m-1}$}
		%		\STATE
	\end{algorithmic}
	%	\vspace{-0.01cm}
\end{algorithm}

It remains to show that the equidistant pilot sequence minimizes the \ac{mse} in combination with the \ac{cme}.
Therefore, the \ac{cme} of the normalized channel can be obtained as
\begin{align}
	\op E[ \op e^{\op j \theta} | \B r]  &=  \int_0^{2\pi} \op e^{\op j \theta} f_{\theta | \B r}(\theta | \B r ) \dx \theta \\
	&= \frac{1}{\fr(\B r)} \int_0^{2\pi}  \op e^{\op j \theta} f_{\theta}(\theta) p_{\B r | \theta} (\B r | \theta) \dx \theta.
	\label{eq:cme_phase_bayes}
%	\\
%	&= \frac{1}{2\pi \fr(\B r)} \int_0^{2\pi}  \op e^{\op j \theta} p_{\B r | \theta} (\B r | \theta) \dx \theta
%	\label{eq:cme_phase_uniform}
\end{align}
%where Bayes' theorem is used in \eqref{eq:cme_phase_bayes}.
%Bayes rule is used and in \eqref{eq:cme_phase_uniform} the uniform distribution is plugged in. 
Due to the absence of \ac{awgn}, similarly as in \eqref{eq:p_r_given_h_noiseless}, the conditional probability $p_{\B r|\theta}$ is an indicator function 
\begin{align}
	p_{\B r| \theta}(\B r|\theta) =\begin{cases}1 &\phi_\text{low}(\B r) \leq \theta < \phi_{\text{high}}(\B r)  \\ 0 &\text{else} \end{cases}
	\label{eq:p_r_given_phi_pilots}
\end{align}
where the boundary angles $\phi_{\text{low}}(\B r)$ and $\phi_{\text{high}}(\B r)$ are determined via the above discussed procedure.
Given these preparations, we can now find the \ac{mse}-optimal pilot sequence.
\begin{theorem}\label{theorem:pilot_opt}
	The \ac{mse}-optimal pilot sequence for the above system contains equidistant phase shifts $\psi_m=\frac{\pi(m-1)}{2M}~\forall m = 1,...,M,$ such that 
	\begin{equation}
		[\B a]_m = \exp\left( \op j \left(\frac{\pi}{2M}(m-1)\right) \right),~m=1,...,M.
		\label{eq:pilot_vector}
	\end{equation}
\end{theorem}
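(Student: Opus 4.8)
The plan is to express the \ac{mse} purely in terms of the pilot phases and then solve a scalar concave optimization. First I would use that $\hat{\B h} = \E[h|\B r]$ is the \ac{cme}, so by the orthogonality principle $\operatorname{MSE} = \E[|h|^2] - \E_{\B r}[\,|\E[h|\B r]|^2\,] = \sigma^2 - \E_{\B r}[\,|\E[h|\B r]|^2\,]$, and then invoke the factorization \eqref{eq:cme_pilots_split}, $\E[h|\B r] = \E[\alpha]\,\E[\op e^{\op j\theta}|\B r]$, together with the fact that $\E[\alpha]^2 = \pi\sigma^2/4$ is a fixed constant (Rayleigh magnitude). Hence $\operatorname{MSE} = \sigma^2 - \tfrac{\pi\sigma^2}{4}\,\E_{\B r}[\,|\E[\op e^{\op j\theta}|\B r]|^2\,]$, and minimizing the \ac{mse} over pilot sequences of the form \eqref{eq:pilot_vec_generic} is equivalent to maximizing $\E_{\B r}[\,|\E[\op e^{\op j\theta}|\B r]|^2\,]$. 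Restricting to the family \eqref{eq:pilot_vec_generic} --- unit magnitudes, phases in $[0,\tfrac{\pi}{2})$ --- is justified by the discussion preceding the theorem: the magnitude carries no information after the quantizer and, by the $\tfrac{\pi}{2}$-periodic symmetry of $Q(\cdot)$, the phases may be reduced modulo $\tfrac{\pi}{2}$ and ordered.

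Next I would evaluate the inner expectation. By \eqref{eq:p_r_given_phi_pilots} the posterior $f_{\theta|\B r}$ is uniform on the sub-sector $[\phi_{\text{low}}(\B r),\phi_{\text{high}}(\B r))$, so with $\Delta(\B r) := \phi_{\text{high}}(\B r) - \phi_{\text{low}}(\B r)$ a direct integration of $\int \op e^{\op j\theta}\dx\theta$ over that interval gives $|\E[\op e^{\op j\theta}|\B r]| = \tfrac{2\sin(\Delta(\B r)/2)}{\Delta(\B r)}$, i.e. $|\E[\op e^{\op j\theta}|\B r]|^2 = \bigl(\tfrac{\sin(\Delta(\B r)/2)}{\Delta(\B r)/2}\bigr)^2$ --- the only free quantity is the \emph{width} of the sub-sector. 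Tracing \Cref{alg:angle_boundary} (equivalently, by the picture in \Cref{fig:pilots_sketch2}), each of the four main quadrants is partitioned into $M$ sub-sectors of widths $d_m := \psi_{m+1} - \psi_m$, $m=1,\dots,M$, with the convention $\psi_{M+1} := \tfrac{\pi}{2}$, so that $d_m \ge 0$ and $\sum_{m=1}^M d_m = \tfrac{\pi}{2}$: the ``no label change'' event is the sector of width $\tfrac{\pi}{2}-\psi_M$, while a first label change at pilot $m$ is the sector of width $\psi_m - \psi_{m-1}$. By the circular symmetry and the uniformity of $\theta$ inside a quadrant, the receive signal lands in the sub-sector of width $d_m$ with probability $d_m/(\pi/2)$, so
\[
	\E_{\B r}\bigl[\,|\E[\op e^{\op j\theta}|\B r]|^2\,\bigr] \;=\; \frac{2}{\pi}\sum_{m=1}^M d_m\left(\frac{\sin(d_m/2)}{d_m/2}\right)^{\!2} \;=\; \frac{8}{\pi}\sum_{m=1}^M \frac{1-\cos d_m}{2\,d_m}.
\]

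It then remains to maximize $F(d) := \sum_{m=1}^M g(d_m)$, with $g(x) := \tfrac{1-\cos x}{2x}$ (and $g(0):=0$ by continuity), over the simplex $\{\,d\in\R_{\ge 0}^M : \sum_m d_m = \tfrac{\pi}{2}\,\}$. The crucial step is to show $g$ is strictly concave on $[0,\tfrac{\pi}{2}]$: a computation gives $g''(x) = \dfrac{x^2\cos x - 2x\sin x + 2 - 2\cos x}{2x^3}$, and writing $w(x)$ for the numerator one checks $w(0)=0$ and $w'(x) = -x^2\sin x \le 0$ on $(0,\tfrac{\pi}{2}]$, whence $w(x)<0$ and $g''(x)<0$ there. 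Strict concavity and Jensen's inequality then give $F(d) \le M\,g\bigl(\tfrac1M\sum_m d_m\bigr) = M\,g\bigl(\tfrac{\pi}{2M}\bigr)$ with equality iff $d_1 = \dots = d_M = \tfrac{\pi}{2M}$; unwinding $d_m = \psi_{m+1}-\psi_m$ with $\psi_1 = 0$ yields $\psi_m = \tfrac{\pi(m-1)}{2M}$, i.e. \eqref{eq:pilot_vector}. As a by-product, substituting the optimal widths back produces the closed form $\operatorname{MSE} = \sigma^2\bigl(1 - \tfrac{4M^2}{\pi}\sin^2\tfrac{\pi}{4M}\bigr)$, which reduces to \eqref{eq:mse_univariate} at $\eta=0$, $M=1$.

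The main obstacle is the concavity of $g$ on $[0,\tfrac{\pi}{2}]$: the sign of $g''$ is not apparent from the quotient, and the clean route is the auxiliary function $w$ with $w(0)=0$ and $w'(x)=-x^2\sin x$. A secondary bookkeeping point needing care is deriving the sub-sector widths from \Cref{alg:angle_boundary}: confirming that for every admissible receive signal $\B r$ the posterior support is exactly an interval of length $d_m = \psi_{m+1}-\psi_m$ (including the boundary sector of width $\tfrac{\pi}{2}-\psi_M$), and that these events occur with probability proportional to their widths, so that the objective is \emph{exactly} $\tfrac{8}{\pi}F(d)$.
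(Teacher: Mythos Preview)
Your proof is correct and takes a genuinely different route from the paper's.

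The paper writes the \ac{cme} as $\hat{\alpha}\op e^{\op j\hat{\theta}}$, expands the \ac{mse} via the cosine law as $\E[\alpha^2+\hat{\alpha}^2-2\alpha\hat{\alpha}\cos(\theta-\hat{\theta})]$, and argues (somewhat informally, via monotonicity of $\cos$ on $[0,\tfrac{\pi}{2})$) that this is minimized exactly when the phase \ac{mse} $\E[(\theta-\hat{\theta})^2]$ is minimized. The latter is then recognized as the classical scalar quantizer design problem for a uniformly distributed source, for which the optimal regions are known to be equidistant; the paper cites \cite{gersho_quantization} for this fact. Your argument instead computes the exact \ac{mse} objective as $\sigma^2-\tfrac{\pi\sigma^2}{4}\cdot\tfrac{8}{\pi}\sum_m g(d_m)$ with $g(x)=\tfrac{1-\cos x}{2x}$, establishes strict concavity of $g$ on $[0,\tfrac{\pi}{2}]$ by the auxiliary-function trick $w(0)=0$, $w'(x)=-x^2\sin x$, and concludes via Jensen. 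The benefit of your route is that it is self-contained (no external quantization result needed), fully rigorous at the reduction step (you never approximate $\cos(\theta-\hat{\theta})$ by $(\theta-\hat{\theta})^2$), and it delivers the closed-form \ac{mse} \eqref{eq:mse_cme_multiple_pilots} as an immediate by-product rather than as a separate theorem. The paper's route is shorter and situates the result within the quantizer-design literature, which may be illuminating for readers familiar with that area.
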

\textit{Proof:} See Appendix \ref{app:proof_opt_pilots}.

For the optimal pilot sequence \eqref{eq:pilot_vector}, the boundary angles from \eqref{eq:p_r_given_phi_pilots} can now be compactly written as
\begin{equation}\label{eq:angle_function}
	\phi_{\text{low}}(\B r) = \frac{1}{M}\sum_{m=1}^M \angle([\B r]_m) - \frac{\pi}{4}
\end{equation}
and $\phi_{\text{high}}(\B r) = \phi_{\text{low}}(\B r) + \frac{\pi}{2M}$.
%Note that the pilot sequence \eqref{eq:pilot_vector} is an ambiguous choice since a constant phase offset can be introduced such that the resulting pilots still have the necessary properties when adapting the angle function $\phi$ from \eqref{eq:angle_function}.
Interestingly, the same pilot sequence was proposed in \cite{Zhang2020} with the purpose to achieve a bijective mapping between a fixed dataset of channels and corresponding observations with long enough pilot sequences.

We can now find a closed-form solution for the \ac{cme} based on the optimal pilot sequence from Theorem \ref{theorem:pilot_opt}. 
\begin{theorem}\label{prop:cme_multiple_pilots}
	The \ac{cme} for the above system is computed to
	\begin{equation}
		\op E [h | \B r] = \frac{2M\sigma}{\sqrt{\pi}}\sin\left(\frac{\pi}{4M}\right)\exp\left(\op j \left(\frac{\pi}{4M} + \phi_{\text{\normalfont low}}(\B r)\right)\right).
		\label{eq:cme_multiple_pilots}
	\end{equation} 
\end{theorem}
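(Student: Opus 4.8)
The plan is to build on the factorization already established in \eqref{eq:cme_pilots_split}, $\op E[h\mid\B r]=\op E[\alpha]\,\op E[\exp(\op j\theta)\mid\B r]$, and evaluate the two factors in turn. The magnitude factor is a standard moment computation: since $h\sim\mathcal N_\C(0,\sigma^2)$, its real and imaginary parts are i.i.d.\ $\mathcal N(0,\sigma^2/2)$, so $\alpha=|h|$ is Rayleigh distributed with $\op E[\alpha^2]=\sigma^2$, which gives $\op E[\alpha]=\frac{\sqrt\pi}{2}\sigma$. All of the remaining work concerns the phase factor $\op E[\exp(\op j\theta)\mid\B r]$.

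For that factor I would start from the Bayesian form \eqref{eq:cme_phase_bayes} with the uniform prior $f_\theta(\theta)=\frac{1}{2\pi}$ and the indicator likelihood \eqref{eq:p_r_given_phi_pilots}. The crucial input, supplied by Theorem~\ref{theorem:pilot_opt} together with Algorithm~\ref{alg:angle_boundary}, is that for the equidistant pilot sequence \eqref{eq:pilot_vector} every receive signal $\B r$ of positive probability is associated with a sub-sector of fixed angular width $\phi_{\text{high}}(\B r)-\phi_{\text{low}}(\B r)=\frac{\pi}{2M}$ whose lower endpoint is the closed-form scalar \eqref{eq:angle_function}. Consequently both the normalizing constant and the numerator of \eqref{eq:cme_phase_bayes} reduce, up to the common factor $\frac{1}{2\pi}$, to the integrals of $1$ and of $\exp(\op j\theta)$ over the arc $[\phi_{\text{low}}(\B r),\phi_{\text{low}}(\B r)+\frac{\pi}{2M})$, and both are elementary. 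The normalization gives $\fr(\B r)=\frac{1}{2\pi}\cdot\frac{\pi}{2M}=\frac{1}{4M}$ (consistently with the fact that the $4M$ equiprobable sub-sectors each carry prior mass $\frac{1}{4M}$), and using $\exp(\op j x)-1=2\op j\sin(x/2)\exp(\op j x/2)$ the numerator collapses to $\frac{1}{\pi}\sin\!\big(\frac{\pi}{4M}\big)\exp\!\big(\op j(\phi_{\text{low}}(\B r)+\frac{\pi}{4M})\big)$.

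Dividing numerator by $\fr(\B r)$ yields $\op E[\exp(\op j\theta)\mid\B r]=\frac{4M}{\pi}\sin\!\big(\frac{\pi}{4M}\big)\exp\!\big(\op j(\phi_{\text{low}}(\B r)+\frac{\pi}{4M})\big)$, and multiplying by $\op E[\alpha]=\frac{\sqrt\pi}{2}\sigma$ produces exactly \eqref{eq:cme_multiple_pilots}. None of the integrals is analytically deep; the point that deserves care — and which I regard as the crux — is the geometric claim that for the equidistant sequence the boundary construction of Algorithm~\ref{alg:angle_boundary} really does partition each of the four main quadrants into $M$ arcs of equal width $\frac{\pi}{2M}$, so that the integration limits depend on $\B r$ only through the single scalar $\phi_{\text{low}}(\B r)$; this is precisely what renders the integrals uniform in $\B r$ and hence the estimator closed-form. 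As a sanity check I would also verify the limiting case $M=1$: there $\sin(\pi/4)=1/\sqrt2$ and $\phi_{\text{low}}(\B r)+\pi/4=\angle([\B r]_1)$, so \eqref{eq:cme_multiple_pilots} reduces to $\sqrt{2/\pi}\,\sigma\,[\B r]_1$, matching the noiseless ($\eta=0$) specialization of \eqref{eq:cme_univariate}.
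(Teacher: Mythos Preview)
Your proposal is correct and follows essentially the same route as the paper's proof: both use the factorization \eqref{eq:cme_pilots_split}, evaluate $\op E[\alpha]=\tfrac{\sqrt\pi}{2}\sigma$, compute $\fr(\B r)=\tfrac{1}{4M}$, and integrate $\op e^{\op j\theta}$ over the arc $[\phi_{\text{low}}(\B r),\phi_{\text{low}}(\B r)+\tfrac{\pi}{2M})$. The only cosmetic difference is that the paper integrates over $[0,\tfrac{\pi}{2M})$ first and then shifts by $\phi_{\text{low}}(\B r)$, whereas you integrate over the general arc directly via the identity $\op e^{\op j x}-1=2\op j\sin(x/2)\op e^{\op j x/2}$; the content is the same.
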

\textit{Proof:} See Appendix \ref{app:proof_cme_multiple_pilots}.

The closed-form \ac{mse} for the \ac{cme} from \eqref{eq:cme_multiple_pilots} is then
\begin{align}
	\text{MSE}
	&= \sigma^2 \left(1 - \frac{4M^2}{\pi}\sin^2\left(\frac{\pi}{4M}\right)\right).
	\label{eq:mse_cme_multiple_pilots}
\end{align}
In contrast, the \ac{mse} for the Bussgang estimator from \eqref{eq:h_buss} for the present system is computed to
\begin{align}
	\text{MSE} 
	&= \sigma^2 \left( 1 - \frac{2}{\pi} \B a\h \B C_{\B r}\inv \B a\right).
	\label{eq:mse_buss_multiple_pilots}
\end{align}

Although the \ac{mse} terms in \eqref{eq:mse_cme_multiple_pilots} and \eqref{eq:mse_buss_multiple_pilots} seem to be different, we show their equality in the following.
\begin{theorem}\label{prop:mse_equality}
	The \ac{mse} of the Bussgang estimator from \eqref{eq:mse_buss_multiple_pilots} is equal to the \ac{mse} of the \ac{cme} from \eqref{eq:mse_cme_multiple_pilots} for an arbitrary number of pilot observations $M$.
\end{theorem}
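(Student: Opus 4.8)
The plan is to reduce the claimed \ac{mse} equality to a single scalar identity and then prove that identity by exhibiting $\B a$ as an eigenvector of $\B C_{\B r}$. Comparing \eqref{eq:mse_buss_multiple_pilots} with \eqref{eq:mse_cme_multiple_pilots}, the two expressions agree if and only if
\begin{equation}
	\B a\h\B C_{\B r}\inv\B a = 2M^2\sin^2\!\left(\tfrac{\pi}{4M}\right),
	\label{eq:key_identity_plan}
\end{equation}
so everything hinges on evaluating this quadratic form for the optimal pilot sequence of Theorem~\ref{theorem:pilot_opt}.

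First I would make $\B C_{\B r}$ explicit. Here $\B A=\B a$ and $\B n=\B 0$, so $\B C_{\B y}=\sigma^2\B a\B a\h$ with unit-modulus entries $[\B a]_m=\op e^{\op j\psi_m}$; hence $\B\Psi=\diag(\B C_{\B y})\sqrtm=\sigma\inv\eye$ and $\B\Psi\B C_{\B y}\B\Psi=\B a\B a\h$, whose $(m,\ell)$ entry is $\op e^{\op j(\psi_m-\psi_\ell)}$. Since all pilot phases lie in $[0,\tfrac{\pi}{2})$, we have $|\psi_m-\psi_\ell|<\tfrac{\pi}{2}$, so that $\arcsin(\sin(\psi_m-\psi_\ell))=\psi_m-\psi_\ell$ and $\arcsin(\cos(\psi_m-\psi_\ell))=\tfrac{\pi}{2}-|\psi_m-\psi_\ell|$. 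Substituting into the arcsine law \eqref{eq:arcsine_law} and inserting the equidistant optimal phases $\psi_m=\tfrac{\pi(m-1)}{2M}$ yields the Hermitian Toeplitz matrix $[\B C_{\B r}]_{m,\ell}=1-\tfrac{|m-\ell|}{M}+\op j\tfrac{m-\ell}{M}$.

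Second I would show $\B C_{\B r}\B a=\lambda_M\B a$ by direct summation. Writing $\omega=\op e^{\op j\pi/(2M)}$, so that $[\B a]_m=\omega^{m-1}$ and $\omega^M=\op j$, the $m$th entry of $\B C_{\B r}\B a$ decomposes into the three sums $\sum_p\omega^p$, $\sum_p|m-1-p|\,\omega^p$ and $\sum_p(m-1-p)\,\omega^p$ over $p=0,\dots,M-1$, each of which has a closed form via $\sum_{p=1}^{n}p\,z^p=\tfrac{z-(n+1)z^{n+1}+nz^{n+2}}{(1-z)^2}$. Collecting terms and repeatedly using $\omega^M=\op j$, all $m$-dependent boundary contributions cancel and one is left with $[\B C_{\B r}\B a]_m=\tfrac{-2\omega}{M(\omega-1)^2}\,\omega^{m-1}$, i.e., $\lambda_M=\tfrac{-2\omega}{M(\omega-1)^2}$. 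Using $(\omega-1)^2=\omega\,(2\op j\sin(\tfrac{\pi}{4M}))^2=-4\omega\sin^2(\tfrac{\pi}{4M})$ then gives $\lambda_M=\bigl(2M\sin^2(\tfrac{\pi}{4M})\bigr)\inv$.

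Finally, since $\B a$ is an eigenvector, $\B a\h\B C_{\B r}\inv\B a=\lambda_M\inv\|\B a\|_2^2=M/\lambda_M=2M^2\sin^2(\tfrac{\pi}{4M})$, which is \eqref{eq:key_identity_plan}; substituting into \eqref{eq:mse_buss_multiple_pilots} reproduces \eqref{eq:mse_cme_multiple_pilots} verbatim. The main obstacle is the bookkeeping in the second step: one has to track the boundary terms of the truncated geometric sums and verify --- crucially relying on the value $\omega^M=\op j$ enforced by the equidistant optimal pilot phases --- that they cancel exactly, so that $\B a$ is genuinely an eigenvector of $\B C_{\B r}$; the remaining manipulations are routine.
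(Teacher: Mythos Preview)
Your argument is correct and takes a genuinely different route from the paper. Both proofs first reduce the claim to the scalar identity $\B a\h\B C_{\B r}\inv\B a=2M^2\sin^2(\tfrac{\pi}{4M})$ and compute the explicit Toeplitz form $[\B C_{\B r}]_{m,\ell}=1-\tfrac{|m-\ell|}{M}+\op j\tfrac{m-\ell}{M}$. From there the paper proceeds by \emph{exhibiting the full closed-form inverse} $\B C_{\B r}\inv$ (a tridiagonal matrix with two extra corner entries) and verifying $\B C_{\B r}\inv\B C_{\B r}=\eye$ by a case analysis over rows; the quadratic form then collapses because only the near-diagonal entries of $\B C_{\B r}\inv$ survive. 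You instead show that $\B a$ is an \emph{eigenvector} of $\B C_{\B r}$ with eigenvalue $\lambda_M=\bigl(2M\sin^2(\tfrac{\pi}{4M})\bigr)\inv$, so the quadratic form is simply $\|\B a\|_2^2/\lambda_M$. Your approach is more economical for the goal at hand and avoids having to guess the inverse; the paper's approach, by contrast, delivers the explicit $\B C_{\B r}\inv$ as a byproduct, which the authors highlight as independently useful. One suggestion: the eigenvector verification can be made cleaner than your three-sum decomposition by writing $[\B C_{\B r}\B a]_m=\omega^{m-1}\sum_{d=m-M}^{m-1}c_d\,\omega^{-d}$ with $c_d=1-\tfrac{|d|}{M}+\op j\tfrac{d}{M}$ and checking directly that $c_m\omega^{-m}=c_{m-M}\omega^{-(m-M)}$ (which is a one-line consequence of $\omega^M=\op j$), so the sum is $m$-independent; this sidesteps the boundary bookkeeping you flag as the main obstacle.
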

\textit{Proof:} See Appendix \ref{app:proof_mse_equality_pilots}.

Theorem \ref{prop:mse_equality} surprisingly shows that the linear Bussgang estimator achieves the same \ac{mse} as the \ac{cme} and is thus an \ac{mse}-optimal estimator for that case. 
Interestingly, as a byproduct of the proof in Appendix \ref{app:proof_mse_equality_pilots} we have found a closed-form expression for the inverse covariance matrix $\B C_{\B r}\inv$ in \eqref{eq:mse_buss_multiple_pilots}.
Yet again, this case can be interpreted as an asymptotically high \ac{snr} analysis. 
It should be noted here that a similar proof is not possible in the presence of \ac{awgn} as shown through numerical experiments later.
With the closed-form \ac{mse} expression in \eqref{eq:mse_cme_multiple_pilots}, we can now study the asymptotically large number of pilots regime where we find the following result which, due to Theorem \ref{prop:mse_equality}, also holds for the Bussgang estimator.
\begin{theorem}\label{prop:mse_infinite_pilots}
	For $M\to \infty$ the \ac{mse} from \eqref{eq:mse_cme_multiple_pilots} is
	\begin{equation}
		\lim_{M\to\infty} \text{\normalfont MSE} = \sigma^2\left(1-\frac{\pi}{4}\right).
		\label{eq:mse_limit_multiple_pilots}
	\end{equation}
\end{theorem}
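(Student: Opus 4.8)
The plan is to evaluate the limit of the closed-form expression \eqref{eq:mse_cme_multiple_pilots} directly, reducing it to the elementary limit $\lim_{x\to 0}\sin(x)/x = 1$. Since $\sigma^2$ is a fixed constant, it suffices to determine $\lim_{M\to\infty}\frac{4M^2}{\pi}\sin^2\!\left(\frac{\pi}{4M}\right)$.

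First I would substitute $x = \frac{\pi}{4M}$, so that $x\to 0^+$ as $M\to\infty$, and rewrite the $M$-dependent term by factoring $x^2 = \frac{\pi^2}{16M^2}$ out of the squared sine:
\begin{equation*}
	\frac{4M^2}{\pi}\sin^2\!\left(\frac{\pi}{4M}\right) = \frac{4M^2}{\pi}\cdot\frac{\pi^2}{16M^2}\left(\frac{\sin x}{x}\right)^{2} = \frac{\pi}{4}\left(\frac{\sin x}{x}\right)^{2}.
\end{equation*}
By continuity of $t\mapsto t^{2}$ together with the standard limit $\sin(x)/x\to 1$, this expression tends to $\pi/4$. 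Substituting back into \eqref{eq:mse_cme_multiple_pilots} then yields $\lim_{M\to\infty}\text{MSE} = \sigma^2\!\left(1 - \frac{\pi}{4}\right)$, which is exactly \eqref{eq:mse_limit_multiple_pilots}.

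There is essentially no obstacle in this argument. The only minor points worth stating explicitly are that $M$ ranges over the positive integers, so the limit is the restriction to $M\in\mathbb{N}$ of the continuous limit in $x$, which exists and hence agrees with it; and, if a sharper statement is desired, one can insert the Taylor expansion $\sin x = x - x^{3}/6 + O(x^{5})$ to obtain $\frac{4M^2}{\pi}\sin^2(\frac{\pi}{4M}) = \frac{\pi}{4}\bigl(1 - \frac{\pi^{2}}{48 M^{2}} + O(M^{-4})\bigr)$. Since $(\sin x / x)^{2} < 1$ and is increasing as $x\downarrow 0$, this also shows that the \ac{mse} decreases monotonically to $\sigma^2(1-\pi/4)$ from above, at rate $O(M^{-2})$.
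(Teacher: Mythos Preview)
Your proof is correct and follows essentially the same route as the paper: both reduce the problem to evaluating $\lim_{M\to\infty} M^2\sin^2\!\left(\frac{\pi}{4M}\right)=\frac{\pi^2}{16}$ and then substitute back into \eqref{eq:mse_cme_multiple_pilots}. The only cosmetic difference is that the paper computes $\lim_{M\to\infty} M\sin(\pi/(4M))$ via L'H\^opital's rule, whereas you use the substitution $x=\pi/(4M)$ and the standard limit $\sin(x)/x\to 1$; your version is slightly more elementary, and your added Taylor-expansion remark on the $O(M^{-2})$ convergence rate and monotonicity is a nice bonus that the paper does not include.
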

\begin{proof}
	We compute 
	\begin{align}
		\lim_{M\to\infty} M\sin\left(\frac{\pi}{4M}\right) \label{eq:lospital1}
		%		&= \lim_{M\to\infty} \frac{\sin\left(\frac{\pi}{4M}\right)}{1 / M} \\
		&= \lim_{M\to\infty }\frac{\pi}{4}\cos\left(\frac{\pi}{4M}\right)
		= \frac{\pi}{4}
	\end{align}
	where the rule of L'Hôpital was used. 
	Since this limit exists, we can conclude that 
	\begin{equation}
		\lim_{M\to\infty} M^2\sin^2\left(\frac{\pi}{4M}\right) = \frac{\pi^2}{16}\label{eq:limit_val}.
	\end{equation}
	Plugging \eqref{eq:limit_val} into \eqref{eq:mse_cme_multiple_pilots} verifies \eqref{eq:mse_limit_multiple_pilots}.
\end{proof}

\subsection{Summary}

\begin{table}[t]
	\centering
	\renewcommand{\arraystretch}{1.0}
	\begin{tabularx}{\columnwidth}{|p{0.5cm}|Y|Y|Y|Y|Y|Y|}
		\hline 
		\textbf{Sec.} &\textbf{Dim.} $N$ &\textbf{AWGN} &\textbf{Corr.} &\textbf{Pilots} $M$&\textbf{Optimal}\\
%		\noalign{\hrule height 1.5pt}
		\hline
		\ref{subsec:numeric}&$\geq1$ &yes &yes &$\geq 1$ &no \\
		\hline
		\ref{subsec:univariate_cme}&$\geq1$ &yes&no &$ 1$ &yes \\
		\hline
		\ref{subsec:multivariate_noiseless_cme}&$> 1$ &no &yes &$1$ &no \\
		\hline
		\ref{subsec:multiple_pilots}&$1$ &no &n.a. &$\geq 1$ &yes\\
		\hline	
		%		$> 1$ &yes &yes &$\geq1$ &no\\
		%		\hline	
	\end{tabularx}
	\caption{Overview of different system parameters and the \ac{mse}-optimality of the Bussgang estimator.}
	\label{tab:overview}
\end{table}

So far, we have discussed the \ac{cme} and its relationship to the Bussgang estimator for different system instances. Remarkably, in \Cref{subsec:univariate_cme} and \Cref{subsec:multiple_pilots} the Bussgang estimator is shown to be equal to the \ac{cme} which has a closed-form. This is a novel result that has not been stated in the literature so far. Because of the closed-form \ac{mse} expression of the \ac{cme}, an analysis of the large number of pilots regime was possible which resulted in a closed-form limit.
What is more, we have discussed in \Cref{subsec:multivariate_noiseless_cme} that the equality might not hold for correlated channel entries, even in the absence of \ac{awgn}. However, we have derived a computationally efficient expression for the \ac{cme} in that case. 

An overview of whether the Bussgang estimator is equal to the \ac{cme} or not for the different cases can be found in Table~\ref{tab:overview}.
%the different cases where the Bussgang estimator is equal to the \ac{cme} or not can be found in Table~\ref{tab:overview}.
Therein, we show the cases where the Bussgang estimator is \ac{mse}-optimal with respect to the dimension (Dim.) $N$, the presence/absence of \ac{awgn}, the correlation (Corr.) between channel entries and the number of pilot observations $M$. For convenience, we referenced the subsections with the corresponding derivations. The first row refers to the general case where none of the discussed special cases applies.
In the following, after introducing different performance metrics besides the \ac{mse}, we discuss the above results in numerical experiments which verify the theoretical findings.
%and quantify the gap between the Bussgang estimator and the \ac{cme} for different system parameters.

\section{Different Performance Metrics}\label{sec:performance_metrics}
\subsection{Cosine Similarity}
The amplitude of the received signal is lost due to the one-bit quantization. For this reason, one might solely be interested in how well an estimator can recover the angle of the signal of interest. A measure for the difference of the angle $\psi$ between the true and estimated channel is the \textit{cosine similarity} which we define as
\begin{equation}
	\cos\psi = \frac{\Re(\B h\h \hat{\B h} )}{\|\B h \| \| \hat{\B h}\|}.
	\label{eq:cosine_sim}
\end{equation}
Note that the cosine similarity lies between $-1$ (opposite direction) and $1$ (same direction).

\subsection{Achievable Rate Lower Bound}
The achievable rate is of great interest in quantized systems \cite{Li2017,7894211}. 
For our considerations, we are aiming at a comparison between the \ac{cme} and the Bussgang estimator via a lower bound on the corresponding achievable rate of a respective data transmission system that is taking the \ac{csi} mismatch into account. To this end, after estimating the channel with the pilot transmission in \eqref{eq:system_vec}, the data symbol $s$ is transmitted over the same channel, i.e.,
\begin{equation}
	\B r = Q(\B h s + \B n) =  \B B \B h s + \B q
\end{equation}
where in the second equation the linearized model with Bussgang's decomposition is used where $\B q = \B B \B n + \B \eta$. We make the worst-case assumption that the aggregated noise is Gaussian, i.e., $\B q \sim \mathcal{N}_\mathbb{C}(\B 0, \B C_{\B q} = \B C_{\B r}-\B B \B C_{\B h}\B B\h )$, cf. \cite{1193803}, and use a matched filter 
\begin{equation}
	\B g_{\text{MF}}\h = \hat{\B h}\h \B B\h (\B C_{\B r}-\B B \B C_{\B h}\B B\h )\inv
\end{equation}
at the receiver. Note that the variance of the data symbol $s$ is assumed to be one without loss of generality. We further assume that the \ac{snr} is the same during pilot and data transmission.
As in \cite{7569618}, we split up the received data symbol as
\begin{align}
	\hat{s} = \underbrace{\B g_{\text{MF}}\h \B B \hat{\B h}s}_{\text{desired signal}} + \underbrace{\B g_{\text{MF}}\h\B B \B \epsilon s}_{\text{estimation error}} + \underbrace{\B g_{\text{MF}}\h\B B \B q}_{\text{noise}}
\end{align}
where $\B \epsilon = \B h - \hat{\B h}$ denotes the channel estimation error. Thus, we can evaluate a lower bound on the achievable rate, cf. \cite[eq. (20)]{7569618}, as
\begin{align}
	R &=  \op E\left[\log_2\left(1 + \frac{|\B g_{\text{MF}}\h \B B \hat{\B h}|^2}{|\B g_{\text{MF}}\h\B B \B \epsilon|^2 + \B g_{\text{MF}}\h\B B \B C_{\B q}  \B B\h \B g_{\text{MF}}} \right)\right]
		\label{eq:rate}
\end{align}
with Monte Carlo simulations.

\section{Numerical Results}\label{sec:num_results}

In this section, we perform numerical experiments that verify our theoretical derivations and allow for quantifying the performance gap between the Bussgang estimator and the \ac{cme} in the general case.
We consider a diagonal noise covariance $\B C_{\B n} = \eta^2\eye$. For the case of $N=1$, we choose without loss of generality the variance of the channel to $\sigma^2 = 1$. In the multivariate case, we randomly sample channel covariance matrices with the following procedure, cf. \cite{scikit-learn}.
We first generate a matrix $\B S$ where both the real and imaginary parts of all entries are i.i.d. as $\mathcal{U}(0,1)$. Then, we compute the \ac{evd} of $\B S\h \B S = \B V \B\Sigma \B  V\h$. Finally, we construct $\B C_{\B h} = \B V \diag(\B 1 + \B \xi)\B V\h$, where $[\B \xi]_{m} \sim\mathcal{U}(0,1)$. 
Afterwards, we normalize each covariance matrix by first dividing by its trace and afterwards multiplying by its dimension, such that $\tr(\B C_{\B h}) = N$.
%with its trace times the dimension $N$.
For each covariance matrix we draw a single channel sample. This procedure averages out all possible structural features of the covariance matrix. For all simulations we draw a total of $10{,}000$ samples.
With the property $\|\B a\|_2^2 = M$ and $\op E [\| \B h\|^2] = N$ we define the \ac{snr} as 
\begin{equation}
	\text{SNR} = \frac{\|\B a\|_2^2\op E [\| \B h\|^2]}{\op E[\|\B n\|^2]} = \frac{1}{\eta^2}.
	\label{eq:snr}
\end{equation}
The \ac{mse} from \eqref{eq:mse} is normalized by $\op E [\| \B h\|^2] = N$.

\subsection{Univariate Case with a Single Observation}

\begin{figure}[t]
	\centering
	\includegraphics[width=1\columnwidth]{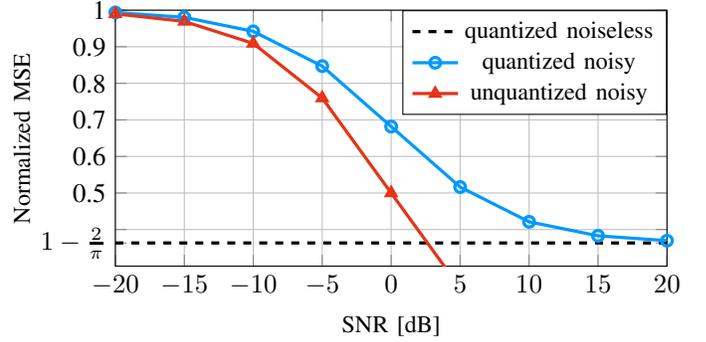}
	\caption{\ac{mse} performance of the \ac{cme} from \Cref{subsec:univariate_cme}  for $N$$=$$M$$=$$1$ for unquantized and one-bit quantized observations.}
	\label{fig:1dim_snr}
\end{figure}

%\begin{figure}[t]
%	\centering
%	\begin{tikzpicture}
%		\begin{axis}
%			[width=1\columnwidth,
%			height=0.56\columnwidth,
%			xtick=data,
%			xmin=-20, 
%			xmax=20,
%			xlabel={\small{SNR [dB]}},
%			ymin= 0.3,
%			ymax= 1,
%			ytick = {0.4, 0.5, 0.6, 0.7, 0.8, 0.9, 1},
%			yticklabels={,0.5, 0.6, 0.7, 0.8, 0.9, 1},
%			extra y ticks ={0.3633802276324186},
%			extra y tick labels={$1-\frac{2}{\pi}$},
%			extra y tick style={grid=none},
%			ylabel= {\small{Normalized MSE}}, 
%			ylabel shift = 0.0cm,
%			grid = both,
%			legend columns = 1,
%			legend entries={
%				\small quantized noiseless,
%				\small quantized noisy,
%				\small unquantized noisy,
%			},
%			legend style={at={(1.0,1.0)}, anchor=north east},
%			]
%			
%			\addplot[mark options={solid},color=black,line width=1.2pt,dashed]
%			table[x=SNR, y=cme_analy, col sep=comma]
%			{csvdat/2022-05-11_14-36-08one-bit-cme_gauss_n_samp=1000000.csv};
%			
%			\addplot[mark options={solid},color=TUMBeamerBlue,line width=1.2pt,mark=o]
%			table[x=SNR, y=CME_1bit, col sep=comma]
%			{csvdat/cme_unquantized_1bit_dim=1.csv};
%			
%			\addplot[mark options={solid},color=TUMBeamerRed,line width=1.2pt,mark=triangle]
%			table[x=SNR, y=CME_unquantized, col sep=comma]
%			{csvdat/cme_unquantized_dim=1.csv};
%			
%		\end{axis}
%	\end{tikzpicture}
%	\caption{\ac{mse} performance of the \ac{cme} from \Cref{subsec:univariate_cme}  for $N$$=$$M$$=$$1$ for unquantized and one-bit quantized observations.}
%	\label{fig:1dim_snr}
%\end{figure}

We first verify the findings from \Cref{subsec:univariate_cme} with $N=M=1$ in Fig. \ref{fig:1dim_snr} where the normalized \ac{mse} is plotted against the \ac{snr}. It can be observed that the \ac{cme} converges to the noiseless \ac{mse} from \eqref{eq:mse_univariate} for $\eta=0$, which equals to $1-\frac{2}{\pi}$, for high \ac{snr} values. 
Additionally, we have plotted the well-known linear \ac{mmse} estimator for unquantized observation, where it is the \ac{cme}, from \eqref{eq:mse_unquantized}. Comparing a one-bit quantized system without \ac{awgn} with an unquantized system, we can compute their \ac{mse} intersection point by means of \eqref{eq:mse_univariate} and \eqref{eq:mse_unquantized} with the \ac{snr} definition \eqref{eq:snr} to $\text{SNR} = \frac{2}{\pi- 2} \approx 2.435 $dB which is verified by the simulation.

\subsection{Multivariate Noiseless Case with a Single Observation}

Next, we discuss the system from \Cref{subsec:multivariate_noiseless_cme} with $M=1$ and $\B n = \B 0$. As discussed above, for this case there exists no closed-form solution, but we can use a computationally efficient algorithm from \cite{Genz1992} to evaluate the integrals from \eqref{eq:p_r} and \eqref{eq:cme_noiseless_final}. In Fig. \ref{fig:dims_mse} it can be seen that the normalized \ac{mse} decreases for larger dimensions $N$ which is a consequence of the existing cross-correlations that are exploited for the estimation. Furthermore, the gap between the Bussgang estimator and the \ac{cme} increases for higher dimensions. That is, the sub-optimality of the Bussgang estimator seems to have greater impact especially in higher dimensions. Note that in the case of $N=1$ there exists the closed-form solution from before where the Bussgang estimator is exactly the \ac{cme}.

\begin{figure}[t]
	\centering
	\includegraphics[width=1\columnwidth]{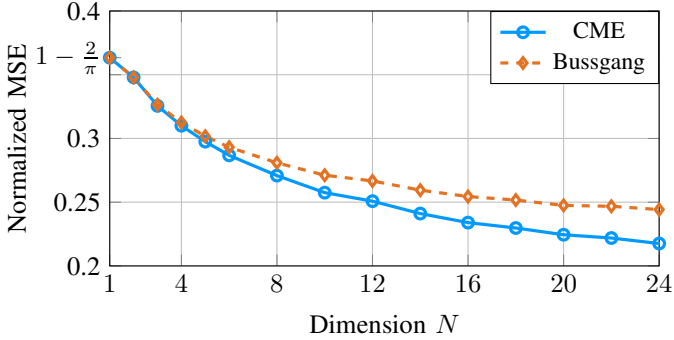}
	\caption{\ac{mse} performance comparison between the Bussgang estimator and the \ac{cme} from \Cref{subsec:multivariate_noiseless_cme} for the multivariate noiseless case with a single pilot observation.}
	\label{fig:dims_mse}
\end{figure}

\glsunset{cdf}
\begin{figure}[t]
	\centering
	\includegraphics[width=1\columnwidth]{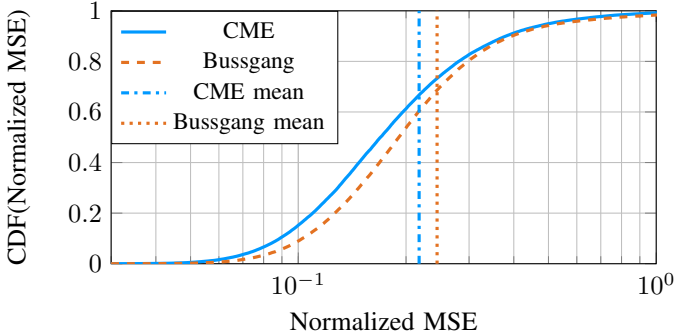}
	\caption{\ac{cdf} curves of the \ac{mse} performance comparison between the Bussgang estimator and the \ac{cme} from \Cref{subsec:multivariate_noiseless_cme} for the multivariate noiseless case with a single pilot observation for $N=24$ dimensions.}
	\label{fig:dims_cdf}
\end{figure}

\begin{figure}[t]
	\centering
	\includegraphics[width=1\columnwidth]{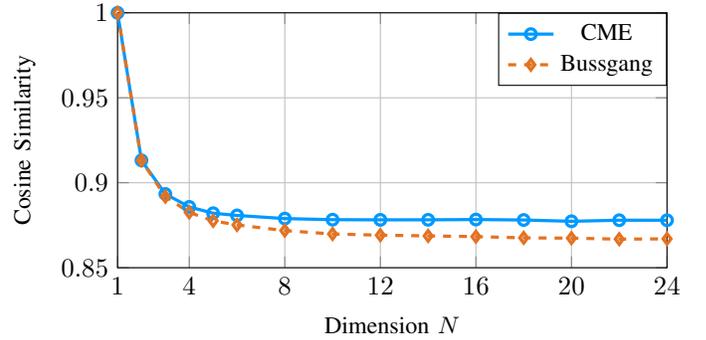}
	
	\vspace{0.2cm}
	
	\includegraphics[width=1\columnwidth]{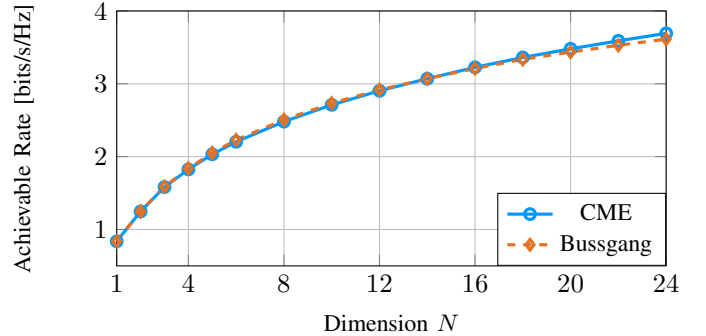}
	\caption{Cosine similarity (top) and achievable rate lower bound (bottom) comparison between the Bussgang estimator and the \ac{cme} from \Cref{subsec:multivariate_noiseless_cme} for the multivariate noiseless case with a single pilot observation.}
	\label{fig:dims_noiseless_rate}
\end{figure}

\glsreset{cdf}
In order to further investigate the performance gap between the \ac{cme} and the Bussgang estimator in this case, in Fig. \ref{fig:dims_cdf}, we evaluate the \ac{cdf} curves of the \ac{mse} of both estimators for $N=24$ dimensions. The vertical axis shows the probability that the \ac{mse} of the respective estimators is smaller than or equal to the value on the horizontal axis for each channel sample.
%	the estimators achieve a certain \ac{mse} performance for a single sample. 
It can be seen that the \ac{cme} estimator is consistently better than the Bussgang estimator, especially in the region around the mean value of the \ac{mse} which we plotted for the ease of reference. In contrast, both estimators perform almost equally well for outliers, i.e., for channel samples which yield a high estimation error.

Fig. \ref{fig:dims_noiseless_rate} (top) evaluates the cosine similarity from \eqref{eq:cosine_sim} over different dimensions $N$. For an increasing number of dimensions it can be seen that the cosine similarity generally decreases. However, the \ac{cme} achieves a higher cosine similarity than the Bussgang estimator with an increasing gap for higher dimensions where a saturation is observed. This is in accordance with the results from the \ac{mse} investigations.

In Fig. \ref{fig:dims_noiseless_rate} (bottom), we analyze the achievable rate lower bound from \eqref{eq:rate} in bits/s/Hz. The lower bound is very similar for both estimators which increases from fewer than 1 bit/s/Hz for $N=1$ up to over 3 bits/s/Hz for $N>14$. 
A possible explanation for this small gap is, together with the results from the analysis of the cosine similarity, that the Bussgang estimator is able to provide a good estimate of the channel directions which seems to be especially important for the achievable rate.
Only for higher dimensions a small gap between the lower bounds can be observed which agrees with the results from the \ac{mse} analysis.

\subsection{Univariate Noiseless Case with Multiple Observations}
In this subsection, we evaluate the results from \Cref{subsec:multiple_pilots} with $N=1$ and $\B n=\B 0$. Fig. \ref{fig:pilots_noiseless} shows the normalized \ac{mse} over the number of pilots $M$. First, it can be seen that the \ac{mse} of the estimators in \eqref{eq:h_buss} and \eqref{eq:cme_multiple_pilots} are indeed equal for the conducted simulation, verifying Theorem \ref{prop:mse_equality}. Second, the \ac{mse} converges from $1-\frac{2}{\pi}$ for $M=1$ to $1-\frac{\pi}{4}$, which is the limit for $M\to \infty$ as shown in Theorem \ref{prop:mse_infinite_pilots}, already at a moderate number of $M=10$ pilots.

\subsection{General Case}

\begin{figure}[t]
	\centering
	\includegraphics[width=1\columnwidth]{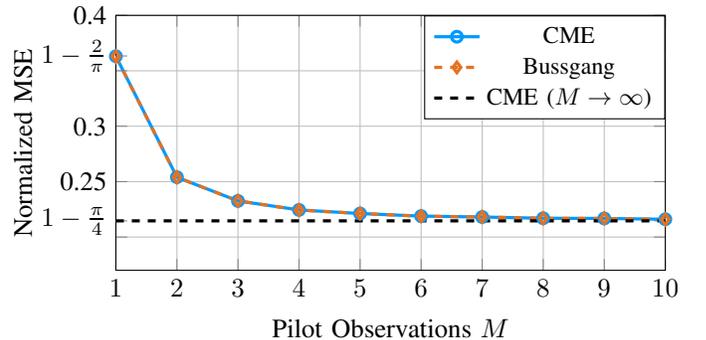}
	\caption{\ac{mse} performance comparison between the Bussgang estimator and the \ac{cme} from \Cref{subsec:multiple_pilots} for the univariate noiseless case with multiple pilot observations.}
	\label{fig:pilots_noiseless}
\end{figure}

\begin{figure}[t]
	\centering
	\includegraphics[width=1\columnwidth]{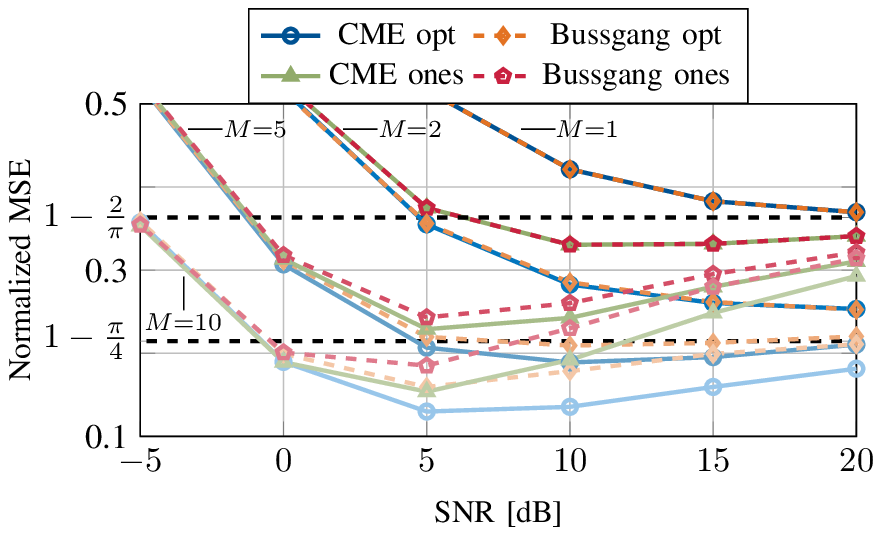}
	\caption{\ac{mse} performance comparison between the Bussgang estimator and the \ac{cme} from \Cref{subsec:numeric} for the univariate case with multiple pilot observations.}
	\label{fig:pilots_noisy}
\end{figure}

In the following, we investigate more general cases which are evaluated with numerical integration as described in \Cref{subsec:numeric}. 
First, we consider the univariate case of $N=1$ for different numbers of pilot observations ($M\in\{1,2,5,10\}$) and with \ac{awgn} for varying \acp{snr} in Fig. \ref{fig:pilots_noisy} where we compare the derived optimal pilot sequence from \eqref{eq:pilot_vector}, labeled ``opt'', with an all-ones pilot sequence $\B a=\B 1$, labeled ``ones'' which was used occasionally in recent works \cite{8683652,8466602}.
For convenience, we have also plotted the closed-form solutions for the noiseless case with $M=1$ and $M\to\infty$ as dashed lines. As expected, for $M=1$, the \ac{cme} equals the Bussgang estimator and converges to the closed-form \ac{mse} $1-\frac{2}{\pi}$ for high \ac{snr} values. Interestingly, for an increasing number of pilot observations there is an increasing gap between the Bussgang estimator and the \ac{cme}, especially in the mid to high \ac{snr} regime. What is more, the optimal pilot sequence achieves a lower \ac{mse} than the all-ones pilot sequence with a large gap especially for high \ac{snr} values, validating the analysis in \Cref{subsec:multiple_pilots} and indicating that the optimal pilot sequence performs well even in the case of finite \ac{snr} values and for the sub-optimal Bussgang estimator.
In contrast, at low \ac{snr} values, both estimators perform almost equally well which can be explained by the dominating effect of the \ac{awgn} in this regime.

It can also be seen that for higher numbers of pilots the \ac{mse} has a minimum at a certain \ac{snr} after which it increases again. What is more, the asymptotic limit of $M\to\infty$ for infinite \ac{snr} ($\B n\to\B 0$) can be outperformed with a finite number of pilots and finite \ac{snr}. These observations, seeming counterintuitive at first, are a consequence of the \textit{stochastic resonance} effect \cite{mcdonnell_stocks_pearce_abbott_2008}. This is a well-known effect related to quantization where noise can improve the performance of a system with quantization. Hence, this analysis allows for a numerical quantification of the stochastic resonance effect, which is the reason why the asymptotic limit can be outperformed and thus also describes the effect of the increasing \ac{mse} after a certain \ac{snr} value.
Interestingly, the influence of the stochastic resonance increases for higher numbers of pilot observations and the optimal pilot sequence from \eqref{eq:pilot_vector} is less prone to the effect of an increasing \ac{mse} in the high \ac{snr} regime than the all-ones sequence.

\begin{figure}[t]
	\centering
	\includegraphics[width=1\columnwidth]{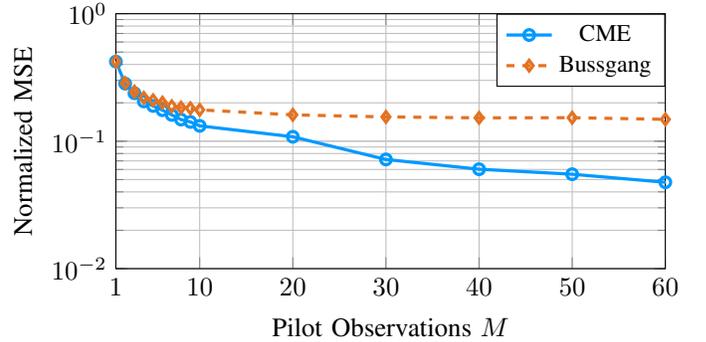}
	\caption{\ac{mse} performance comparison between the Bussgang estimator and the \ac{cme} from \Cref{subsec:numeric} for the univariate case with SNR = 10dB.}
	\label{fig:pilots_10dB}
\end{figure}

%\begin{figure}[t]
%	\centering
%	\begin{tikzpicture}
%		\begin{axis}
%			[width=1\columnwidth,
%			height=0.56\columnwidth,
%			xtick={1, 10, 20, 30, 40, 50, 60, 70, 80},
%			xmin=1, 
%			xmax=60,
%			xlabel={Pilot Observations $M$},
%			ymin= 0.01,
%			ymax= 1,
%			ymode=log,
%			ylabel= {Normalized MSE}, 
%			ylabel shift = 0.0cm,
%			grid = both,
%			legend columns = 1,
%			legend entries={
%				\small CME,
%				\small Bussgang, 
%			},
%			legend style={at={(1.0,1.0)}, anchor=north east,font=\small},
%			]	
%			
%			\addplot[mark options={solid},color=TUMBeamerBlue,line width=1.2pt,mark=o]
%			table[x=pilots, y=cme, col sep=comma]
%			{csvdat/2022-09-19_14-45-59_one-bit-cme_gauss_n_samp=10000_dim=1_pilots_10dB.csv};
%			
%			\addplot[mark options={solid},color=TUMOrange,line width=1.2pt,mark=diamond,dashed]
%			table[x=pilots, y=buss, col sep=comma]
%			{csvdat/2022-09-19_14-45-59_one-bit-cme_gauss_n_samp=10000_dim=1_pilots_10dB.csv};
%			
%		\end{axis}
%	\end{tikzpicture}
%	\caption{\ac{mse} performance comparison between the Bussgang estimator and the \ac{cme} from \Cref{subsec:numeric} for the univariate case with SNR = 10dB.}
%	\label{fig:pilots_10dB}
%\end{figure}

Next, in Fig. \ref{fig:pilots_10dB}, we analyze the behavior of the Bussgang estimator and the \ac{cme} for $N=1$ and an increasing number of pilots $M$ at a fixed \ac{snr} of 10dB. Similar as described before, the performance gap increases for higher numbers of pilots. Whereas the \ac{mse} of the Bussgang estimator saturates with a high error floor, the \ac{mse} of the \ac{cme} decreases further for higher numbers of pilots.

Finally, in Fig. \ref{fig:ndim_snr}, we depict the \ac{mse} for the case of multiple dimensions with \ac{awgn} over the \ac{snr} for correlated channel entries. For convenience, we have also plotted the closed-form solutions for $N=1$ with and without \ac{awgn}. Due to the exponentially increasing complexity of the numerical integration, we were only able to plot until $N=3$ to show the tendency of the results. First, an offset of the normalized \ac{mse} for higher dimensions can be observed which seems to decrease with the number of dimensions. Moreover, the Bussgang estimator is very close to the \ac{cme} which is in accordance with the prior results for low dimensions. The curve for $N=2$ ($N=3$) intersects with the noiseless case of $N=1$ at about 15dB (12dB) \ac{snr}.

\begin{figure}[t]
	\centering
	\includegraphics[width=1\columnwidth]{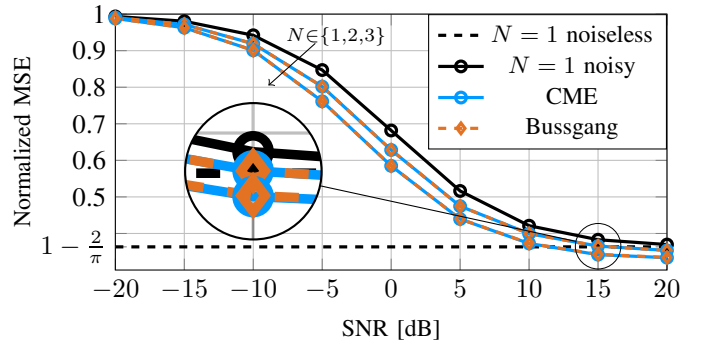}
	\caption{\ac{mse} performance comparison between the Bussgang estimator and the \ac{cme} from \Cref{subsec:numeric} for the multivariate case with a single pilot observation.}
	\label{fig:ndim_snr}
\end{figure}

\section{Conclusion and Future Work}\label{sec:conclusion}
In this work, we have investigated the \ac{cme} for one-bit quantized systems in a jointly Gaussian setting. We derived a novel closed-form solution for multiple pilot observations and a transformation into a computationally efficient expression where only Gaussian integrals have to be evaluated for the case without \ac{awgn}. 
%For the general cases we conducted numerical experiments with the help of numerical integration.
Additionally, we have proposed an \ac{mse}-optimal pilot sequence which we motivated by the structure of the \ac{cme}. It turned out that a successive shift of the transmit signals' angle is optimal which was already proposed in prior work with a different motivation in the context of deep learning. 

We analyzed the well-known Bussgang estimator and showed the equivalence of the Bussgang estimator and the \ac{cme} in the univariate case with a single pilot observation, which extends to the multivariate case when there is no correlation between the channel or noise entries. In addition, we showed that the closed-form \ac{mse} terms of the Bussgang estimator and the \ac{cme} are equal in the case of multiple pilots without \ac{awgn}. To this end, we found a closed-form expression for the inverse of the covariance matrix of the quantized receive signal which is derived with the arcsine law. 
For the general case where numerical integration is necessary, we found that the performance gap between the Bussgang estimator and the \ac{cme} increases for higher dimensions, longer pilot sequences, or higher \ac{snr} values. 
%This observation was confirmed by the analysis of the cosine similarity and a bound on the achievable rate as different performance metrics.
The findings of this work may facilitate extensions to more general input distributions, e.g., \acp{gmm}, different nonlinearities such as the clipper function instead of the one-bit quantization, or to quantizers with higher resolutions.

\newcounter{MYtempeqncnt}
\begin{figure*}[!t]
	% ensure that we have normalsize text
	\normalsize
	% Store the current equation number.
	\setcounter{MYtempeqncnt}{\value{equation}}
	% Set the equation number to one less than the one
	% desired for the first equation here.
	% The value here will have to be changed if equations
	% are added or removed prior to the place these
	% equations are referenced in the main text.
	\setcounter{equation}{66}
	
	\begin{align}
		&[\B T]_{n,n} 
		%= \sum_{l=1}^M [\B C_{\B r}\inv]_{n,l}[\B C_{\B r}]_{l,n} 
		= \left.\begin{cases}  
			M - \frac{M}{2}(2 - \frac{2}{M})  & 1 < n < M \\
			M - \frac{M}{2}(1 - \frac{1}{M} + \op j \frac{1}{M}) + \op j\frac{M}{2}(1-\frac{M-1}{M} + \op j \frac{M-1}{M}) & n= 1 \\
			M - \frac{M}{2}(1-\frac{1}{M} - \op j \frac{1}{M}) - \op j \frac{M}{2}(1 - \frac{M-1}{M} + \op j \frac{1-M}{M}) & n=M
		\end{cases}
		\right\}
		= 1.
		\label{eq:Cr_inv_diag}
		\\
		&[\B T]_{m,n}^{m\neq n} 
		%= \sum_{l=1}^M [\B C_{\B r}\inv]_{m,l}[\B C_{\B r}]_{l,n} 
		= \left.\begin{cases}  
			M(1- \frac{|m-n|}{M} +\op j \frac{m-n}{M}) - \frac{M}{2}(2 - \frac{|m-n-1| + |m-n+1|}{M} + 2\op j \frac{m-n}{M})  & 1 < n < M \\
			M(1 - \frac{|1-n|}{M} + \op j \frac{1-n}{M}) -\frac{M}{2} ( 1 - \frac{|2-n|}{M} + \op j \frac{2-n}{M}) +\op j \frac{M}{2} (1 - \frac{|M-n|}{M} + \op j \frac{M-n}{M})& n= 1 \\
			M(1 - \frac{|M-n|}{M} + \op j \frac{M-n}{M}) - \frac{M}{2}(1 - \frac{|M-n-1|}{M} + \op j\frac{M-n-1}{M}) - \op j \frac{M}{1} (1 - \frac{|1-n|}{M} + \op j \frac{1-n}{M})  & n=M
		\end{cases}
		\right\}
		= 0.
		\label{eq:Cr_inv_offdiag}
	\end{align}
	% Restore the current equation number.
	%\setcounter{MYtempeqncnt}{\value{equation}}
	\setcounter{MYtempeqncnt}{46}
	\setcounter{equation}{\value{MYtempeqncnt}}
	% IEEE uses as a separator
	\hrulefill
	% The spacer can be tweaked to stop underfull vboxes.
	%	\vspace*{4pt}
\end{figure*}

\appendix 
\subsection{Proof of Theorem \ref{prop:cme_multivariate}}\label{app:proof_cme_multivariate}
\begin{proof}
First, it can be observed that due to the absence of noise the expression in \eqref{eq:p_r_given_h_noisy} simplifies to an indicator function
\begin{equation}
	p_{\brr | \bhr} (\brr | \bhr) = \begin{cases} 1,~~\bhr\in\Qinv(\brr) \\ 0,~~\bhr\not\in\Qinv(\brr)\end{cases}
	%	\frh(\B r|\B h) = \begin{cases} 1,~~\B h\in\Qinv(\B r) \\ 0,~~\B h\not\in\Qinv(\B r)\end{cases}
	\label{eq:p_r_given_h_noiseless}
\end{equation}
which means that the probability of receiving the quantized observation $\brr$ is one if the underlying channel lies in $\Qinv(\B r_\Re)$ since the quantization is deterministic. 
With this, we get
\begin{align}
	\E[\bhr | \brr]	&= \frac{1}{p_{\brr}(\brr)} \int_{\Qinv(\brr)} \bhr f_{\bhr}(\bhr)\dx \bhr.
	%	\E[\B h \mid \B r]	&= \frac{1}{\fr(\B r)} \int_{\Qinv(\B r)} \B h \fh(\B h)\dx \B h.
	\label{eq:cme_noiseless}
\end{align}

Second, we can interpret the integral in \eqref{eq:cme_noiseless} as the (scaled) mean of a truncated multivariate Gaussian, i.e.,
\begin{align}\label{eq:integral_as_mean}
	\int_{\Qinv(\brr)} \bhr f_{\bhr}(\bhr) \dx \bhr %&= \int_{\Qinv(\B r)} \B h \mathcal{N}_{\C}(\B h; \B 0, \B C_{\B h}) \dx \B h
	=p_{\brr}(\brr)\op E[\tilde{\B h}_\Re]
\end{align}
where $\tilde{\B h}_\Re\sim f_{\tilde{\B h}_\Re}$ is a truncated Gaussian \ac{rv} with \ac{pdf}
\begin{equation}
	f_{\tilde{\B h}_\Re}(\tilde{\B h}_\Re) = \begin{cases}\frac{1}{p_{\brr}(\brr)} \mathcal{N}(\tilde{\B h}_\Re;\B 0, \B C_{\B h}) & \tilde{\B h}_\Re\in \Qinv(\brr) 
		\\ 0 & \tilde{\B h}_\Re\not\in \Qinv(\brr)\end{cases}
\end{equation}
and $p_{\brr}(\brr)$ is the normalization such that $f_{\tilde{\B h}_\Re}$ is a valid \ac{pdf} that integrates to one.
In \cite{Tallis1961,Cartinhour1990}, the mean of a one-sided truncated Gaussian \ac{rv} is derived, which in our case is computed to
\begin{align}
	% 	\begin{aligned}
		&p_{\brr}(\brr)\op E [[\tilde{\B h}_{\Re}]_i] 
		\label{eq:truncated_Gaussian_mean}
		=
		\\ &\sum_{n=1}^N [\B r_{\Re}]_n [\B C_{\B h}]_{i,n}\mathcal{N}(0;0,[\B C_{\B h}]_{n,n})
		\int_{\Qinv(\brr^n)}\mathcal{N}^n(\B x; \B 0, \B C_{\B h})\dx \B x.
		\nonumber
\end{align}
%where $\mathcal{N}^n(\B 0, \B C_{\B h})$ is the ($N-1$)-dimensional Gaussian \ac{pdf} where the $n$th row and column of $\B C_{\B h}\inv$ is deleted. The superscript $n$ of a vector truncates its $n$th element.
Plugging \eqref{eq:truncated_Gaussian_mean} into \eqref{eq:integral_as_mean} together with \eqref{eq:cme_noiseless} yields the expression in \eqref{eq:cme_noiseless_final} and thus finishes the proof.
\end{proof}

\subsection{Proof of Theorem \ref{theorem:pilot_opt}}\label{app:proof_opt_pilots}
\begin{proof}
	We first observe that with the procedure in Algorithm \ref{alg:angle_boundary} the angle domain can be split into at most $4M$ regions by choosing $\psi_m \neq 0$ $\forall m=2,...,M$ because any of the four quadrants can be divided into $M$ distinct regions which are identified via \eqref{eq:p_r_given_phi_pilots}. 
	Now, we write the \ac{cme} as $\hat{h} = \hat{\alpha}\op e^{\op j \hat{\theta}}$ such that we can reformulate the \ac{mse} as
	\begin{align}
		\op E[|\alpha \op e^{\op j \theta} - \hat{\alpha}\op e^{\op j \hat{\theta}}|^2] 
		&= \op E[\alpha^2 + \hat{\alpha}^2 - 2\alpha \hat{\alpha} \cos(\theta - \hat{\theta})]
		\label{eq:mse_cosine_law}
	\end{align}
	by using the cosine law. Since we are solely interested in the phase estimate for the design of the optimal pilot sequence and the cosine is monotonically decreasing between $[0,\pi/2)$, the minimization of the \ac{mse} between the phases $\op E[(\theta - \hat{\theta})^2]$ is equivalent to the minimization of the \ac{mse} from \eqref{eq:mse_cosine_law} for fixed magnitudes. 
	Note that because of the uniform distribution $f_\theta$ the \ac{cme} $\hat{\theta}$ is exactly the midpoint between the boundary angles which can also be seen later in Theorem \ref{prop:cme_multiple_pilots}.
	The optimization of the pilot sequence is thus given as 
	\begin{align}
		&\argmin_{\B a} ~\op E[(\theta - \hat{\theta}(\B a))^2] \\
		&= \argmin_{\B a} \int_0^{2\pi} (\theta - \hat{\theta}(\B a))^2f_{\theta}(\theta) \dx \theta \\
		&= \argmin_{\B a} \sum_{i=1}^{4M-1} \int_{\phi_i}^{\phi_{i+1}} (\theta - \hat{\theta}(\B a))^2 f_{\theta}(\theta) \dx \theta
		\label{eq:pilot_opt_argmin}
	\end{align}
	where in \eqref{eq:pilot_opt_argmin} the integral over the whole angle domain is split up into the $4M$ subsets that are defined via the boundary angles from \eqref{eq:p_r_given_phi_pilots} which fulfill $\phi_1 < \dots < \phi_{4M}$ where $\phi_1=0$ and $\phi_{4M} = 2\pi$.
	The optimization problem in \eqref{eq:pilot_opt_argmin} is identical to the well known problem of finding the quantization regions of a non-uniform quantizer which minimize the distortion \cite{gersho_quantization}. Because the angles are uniformly distributed, the optimal solution is found by equidistant integration regions, cf. \cite[Sec. 5.6]{gersho_quantization}, which is achieved by equidistant phase shifts in $[0,\frac{\pi}{2})$, i.e., $\psi_m=\frac{\pi(m-1)}{2M}~\forall m = 1,...,M$. Plugging this solution into \eqref{eq:pilot_vec_generic} results in \eqref{eq:pilot_vector} and finishes the proof.
\end{proof}

%\clearpage
\subsection{Proof of Theorem \ref{prop:cme_multiple_pilots}}\label{app:proof_cme_multiple_pilots}
\begin{proof}
	We use the simplified representation of \eqref{eq:cme_pilots_split} where the mean of the Rayleigh distributed magnitudes is known to be, cf. \cite[Appendix A.1.3]{tse_viswanath_2005},
	\begin{equation}
		\op E [\alpha] = \frac{\sqrt{\pi}}{2}\sigma.
		\label{eq:cme_alpha}
	\end{equation}
Furthermore, due to the equidistant phase shifts which divide the input space into $4M$ distinct regions, cf. Appendix \ref{app:proof_opt_pilots}, and the symmetry of the Gaussian \ac{pdf}, we have the prior probability $\fr(\B r) = \frac{1}{4M}$ in this case. 
 Plugging this result and the definition of the uniform distribution $f_{\theta}$ into \eqref{eq:cme_phase_bayes}, the \ac{cme} of the normalized channel is computed to 
\begin{align}
	\op E[\op e^{\op j \theta } | \B r] &= \frac{4M}{2\pi} \int_{\phi_{\text{low}}(\B r)}^{\phi_{\text{low}}(\B r) + \frac{\pi}{2M}} \op e^{\op j \theta }  \dx\theta.
	\label{eq:cme_phase1}
%	\\
%	&= \frac{4M}{2\pi} \left( \sin\left(\frac{\pi}{2M}\right) + \op j\left(1 - \cos\left(\frac{\pi}{2M}\right)\right)\right)
\end{align}
It is sufficient to solve the integral for the first angle sector $[0,\frac{\pi}{2M})$ and afterwards shift the solution by the angle of $\phi_{\text{low}}(\B r)$. We thus compute
\begin{align}
	\int_{0}^{ \frac{\pi}{2M}} \op e^{\op j \theta} \dx\theta 
	&=  \sin\left(\frac{\pi}{2M}\right) + \op j\left(1 - \cos\left(\frac{\pi}{2M}\right)\right) \\
	&=  2\sin\left(\frac{\pi}{4M}\right)\exp\left(\op j\frac{\pi}{4M}\right).
	\label{eq:appA_integral}
\end{align}
After plugging \eqref{eq:cme_alpha} together with the generalized solution from \eqref{eq:cme_phase1} and \eqref{eq:appA_integral} into \eqref{eq:cme_pilots_split} yields \eqref{eq:cme_multiple_pilots} which finishes the proof.
\end{proof}

\subsection{Proof of Theorem \ref{prop:mse_equality}}\label{app:proof_mse_equality_pilots}

\begin{proof}
With the vector in \eqref{eq:pilot_vector}, its outer product is computed as
\begin{equation}
	[\B a \B a\h]_{m,n} = \exp\left(\op j \frac{\pi}{2M}(m-n)\right)
\end{equation}
and thus 
\begin{align}
	[\Re (\B a \B a\h)]_{m,n} &= \cos\left(\frac{\pi}{2M}(m-n)\right), \label{eq:re_aa} 
	\\
	[\Im (\B a \B a\h)]_{m,n} &= \sin\left(\frac{\pi}{2M}(m-n)\right). \label{eq:im_aa}
\end{align}
For the considered case, the unquantized observation is given as $\B y=\B a h$, and thus its covariance matrix can be written as $\B C_{\B y} = \sigma^2\B a \B a\h$. Furthermore, from the property that $|[\B a]_m|^2 = 1$, it follows that $\diag(\B C_{\B y}) = \sigma^2 \eye$. Consequently, the expression for the covariance matrix from \eqref{eq:arcsine_law} simplifies to
\begin{align}
	\B C_{\B r} &= \frac{2}{\pi}\left[\arcsin(\Re(\B a \B a\h))+\op j 
	\arcsin(\Im(\B a \B a\h))\right].
	\label{eq:Cr}
\end{align}
After plugging in \eqref{eq:re_aa} and \eqref{eq:im_aa} we get 
\begin{align}
	[\B C_{\B r}]_{m,n} &= \frac{2}{\pi}\left[\arcsin\left(\cos\left(\frac{\pi}{2M}(m-n)\right)\right)\right. 
	 \label{eq:Cr_middle123}
	\\ \nonumber 
	& ~~~\left.+\op j \arcsin\left(\sin\left(\frac{\pi}{2M}(m-n)\right)\right)\right]
	\\
	&= 1 - \frac{|m-n|}{M} + \op j \frac{m-n}{M},
	\label{eq:Cr_entry}
\end{align}
where the equation in \eqref{eq:Cr_entry} holds since the arguments in \eqref{eq:Cr_middle123} satisfy $-\frac{\pi}{2} < \frac{\pi}{2M}(m-n) < \frac{\pi}{2}$.
The following Lemma computes $\B C_{\B r}\inv$ in closed form.
\begin{lemma}\label{prop:Cr_inv}
	The inverse of $\B C_{\B r}$ from \eqref{eq:Cr_entry} is given as 
	\begin{equation}
		[\B C_{\B r}\inv]_{m,n} = \begin{cases}
			M  & m=n\\
			-\frac{M}{2} & |m-n| = 1 \\
			\op j\frac{M(n-m)}{2(M-1)}  & |m-n| = M-1 \\
			0 &\text{otherwise}
		\end{cases}.
		\label{eq:Cr_inv}
	\end{equation}
\end{lemma}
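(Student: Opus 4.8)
The plan is to prove the lemma by direct verification: I denote by $\B D$ the matrix on the right-hand side of \eqref{eq:Cr_inv} and check that $\B C_{\B r}\B D = \eye$. Since $\B C_{\B r}$ is square, exhibiting such a right inverse simultaneously shows that $\B C_{\B r}$ is invertible and that $\B D = \B C_{\B r}\inv$, so no separate argument for invertibility or for a left inverse is needed. The structural feature that makes this tractable is the sparsity of $\B D$: in its $n$th column the only nonzero entries are the diagonal entry $M$, the two neighbours $-\tfrac M2$ when they exist, and — only when $n\in\{1,M\}$ — a single far-corner entry, namely $[\B D]_{M,1} = \op j\tfrac{M(1-M)}{2(M-1)} = -\op j\tfrac M2$ for the column $n=1$ and $[\B D]_{1,M} = \op j\tfrac M2$ for the column $n=M$ (these are the only two index pairs with $|m-n| = M-1$). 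Hence $[\B C_{\B r}\B D]_{m,n}$ is a three-term combination of consecutive entries of the $m$th row of $\B C_{\B r}$: for $1<n<M$ it equals $M[\B C_{\B r}]_{m,n} - \tfrac M2\big([\B C_{\B r}]_{m,n-1}+[\B C_{\B r}]_{m,n+1}\big)$, and for $n=1$ or $n=M$ the missing neighbour is replaced by the corner term $[\B D]_{M,1}$ or $[\B D]_{1,M}$, respectively. These are exactly the three cases written out in \eqref{eq:Cr_inv_diag}--\eqref{eq:Cr_inv_offdiag}. (The degenerate cases $M\le2$, where some of these index sets coincide, are checked directly.)

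For the interior columns $1<n<M$ I substitute the closed form $[\B C_{\B r}]_{m,k} = 1 - \tfrac{|m-k|}{M} + \op j\tfrac{m-k}{M}$ from \eqref{eq:Cr_entry}. The constant terms cancel because $M-\tfrac M2-\tfrac M2 = 0$; the imaginary parts cancel because $m-n = \tfrac12\big((m-n-1)+(m-n+1)\big)$; and what remains is $\tfrac12\big(|m-n-1|+|m-n+1|\big) - |m-n|$. Since $k\mapsto|k|$ is convex and affine on each of the half-lines $k\le0$ and $k\ge0$, this expression equals $1$ when $m=n$ and $0$ for every other integer value of $m-n$, i.e. $[\B C_{\B r}\B D]_{m,n} = [\eye]_{m,n}$, which is exactly what \eqref{eq:Cr_inv_diag}--\eqref{eq:Cr_inv_offdiag} assert for $1<n<M$.

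It then remains to treat the two boundary columns $n=1$ and $n=M$. The computation is of the same flavour, but one must now carry the extra factor $\op j$ from the corner entry of $\B D$ through the product, pairing it with the imaginary part of $[\B C_{\B r}]_{m,M}$ (for $n=1$) or of $[\B C_{\B r}]_{m,1}$ (for $n=M$), and simplify the absolute values using $|m-1| = m-1$ and $|m-M| = M-m$ for $1\le m\le M$. Collecting real and imaginary parts, the imaginary part again vanishes identically, and the real part collapses — for $n=1$ — to $1-\tfrac m2+\tfrac{|m-2|}{2}$, which is $1$ at $m=1$ and $0$ otherwise; the column $n=M$ is the mirror image. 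This gives $\B C_{\B r}\B D = \eye$ and hence the lemma. The interior columns amount to the one-line convexity identity above, so the real work — and the step most prone to sign slips — is the bookkeeping of the $\op j$-factors in the two boundary columns. One could instead exploit the fact that the real part of $\B C_{\B r}$ is a tent-shaped covariance matrix, whose inverse is tridiagonal, and derive \eqref{eq:Cr_inv} analytically; but the direct verification above is shorter.
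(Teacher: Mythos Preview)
Your proof is correct and follows essentially the same strategy as the paper's: direct verification that the product of the claimed inverse with $\B C_{\B r}$ equals the identity, exploiting the sparsity of \eqref{eq:Cr_inv} to reduce each entry to a three-term sum and splitting into interior versus boundary cases (the paper checks $\B C_{\B r}\inv\B C_{\B r}=\eye$ whereas you check $\B C_{\B r}\B D=\eye$, but for square matrices this is immaterial). Your convexity argument for the interior case and your explicit flag of the degenerate cases $M\le 2$ are a bit cleaner than the paper's bare case-by-case display in \eqref{eq:Cr_inv_diag}--\eqref{eq:Cr_inv_offdiag}, but the substance is identical.
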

\begin{proof}
	To prove that \eqref{eq:Cr_inv} is the inverse of \eqref{eq:Cr} we define the matrix $\B T = \B C_{\B r}\inv \B C_{\B r}$. If we show that $\B T$ is the identity matrix, i.e., $\B T = \eye$, then Lemma \ref{prop:Cr_inv} holds.
	The $m,n$th element of $\B T$ is given as $[\B T]_{m,n}  = \sum_{l=1}^M [\B C_{\B r}\inv]_{m,l}[\B C_{\B r}]_{l,n} $.
	Due to the sparsity of $\B C_{\B r}\inv$, only three summands are non-zero. 
	 For example, we can compute \( [\B T]_{m,n} \) for \( m = n = 1 \):
	\begin{align*}
		&[\B T]_{1,1}
		= \sum_{l=1}^M [\B C_{\B r}^{-1}]_{1,l} [\B C_{\B r}]_{l,1}
		\\&= [\B C_{\B r}^{-1}]_{1,1} [\B C_{\B r}]_{1,1}
		+ [\B C_{\B r}^{-1}]_{1,2} [\B C_{\B r}]_{2,1}
		+ [\B C_{\B r}^{-1}]_{1,M} [\B C_{\B r}]_{M,1}
		\\&= M \cdot 1
		+ \left( -\frac{M}{2} \right) \cdot \left( 1 - \frac{1}{M} + \op j \frac{1}{M} \right)
		\\& \quad+ \op j \frac{M}{2} \cdot \left( 1 - \frac{M-1}{M} + \op j \frac{M-1}{M} \right) = 1.
	\end{align*}
	We show that the diagonal elements of $\B T$, i.e., $m=n$, are one in \eqref{eq:Cr_inv_diag} and that all off-diagonal elements of $\B T$, i.e., $m\neq n$, are zero in \eqref{eq:Cr_inv_offdiag}. 
\end{proof}

	\setcounter{equation}{68}
With Lemma \ref{prop:Cr_inv}, we can now find the expression of 
\begin{align}
	\B a\h \B C_{\B r }\inv \B a &= 
		\sum_{m=1}^M\sum_{n=1}^M [\B a\h]_m [\B C_{\B r}\inv]_{m,n}[\B a]_n \\
		&= \sum_{m=1}^M M - \frac{M}{2} \sum_{m=1}^M \left( \exp(-\op j \frac{\pi}{2M}) +  \exp(\op j \frac{\pi}{2M})\right) \\
		&= M^2 - M^2 \cos\left(\frac{\pi}{2M}\right) \\
		&= 2M^2\sin^2\left(\frac{\pi}{4M}\right).
\end{align}
Plugging this into \eqref{eq:mse_buss_multiple_pilots} results in the same \ac{mse} term as in \eqref{eq:mse_cme_multiple_pilots} and thus finishes the proof.
\end{proof}

%\clearpage
\balance
\bibliographystyle{IEEEtran}
\bibliography{IEEEabrv,library}

\end{document}